\let\@oddfoot\@empty
\theoremstyle{plain}
\newtheorem{fact}[theorem]{Fact}
\newcommand\card[1]{\left|#1\right|}
\DeclareMathOperator{\calF}{\mathcal{F}}
\DeclareMathOperator{\calI}{\mathcal{I}}
\DeclareMathOperator*{\E}{\mathbb{E}}
\DeclareMathOperator{\Tr}{Tr}
\DeclareMathOperator{\vc}{\mathbb{VC}}
\DeclareMathOperator{\un}{\mathbb{U}}
\DeclareMathOperator{\prj}{\mathbb{PR}}
\DeclareMathOperator{\af}{\mathbb{AF}}
\newcommand{\bbF}{\mathbb{F}}
\newcommand{\Q}{{\left\{0,1\right\}}}
\DeclareMathOperator{\IP}{IP}
\DeclareMathOperator{\sat}{sat}
\DeclareMathOperator{\tc}{tc}
\newcommand{\setbinom}[2]{{\left[ #1 \right]}^{#2}}
\newcommand\restr[2]{{
  \left.\kern-\nulldelimiterspace
  #1
  \right|_{#2}
}}
\title{A Variant of the VC-dimension with Applications to Depth-3 Circuits}
\author{Peter Frankl}{R\'{e}nyi Institute, Budapest, Hungary}{peter.frankl@gmail.com}{}{}
\author{Svyatoslav Gryaznov}{Institute of Mathematics of the Czech Academy of Sciences, Prague, Czech Republic \and St.\ Petersburg Department of V.A. Steklov Institute of Mathematics of the Russian Academy of Sciences, Russia}{svyatoslav.i.gryaznov@gmail.com}{https://orcid.org/0000-0002-5648-8194}{Supported by GA{\v C}R grant 19-05497S. The Institute of Mathematics of the Czech Academy of Sciences is supported by RVO: 67985840.}
\author{Navid Talebanfard}{Institute of Mathematics of the Czech Academy of Sciences, Prague, Czech Republic}{talebanfard@math.cas.cz}{https://orcid.org/0000-0002-3524-9282}{Supported by GA{\v C}R grant 19-27871X.}
\authorrunning{P. Frankl, S. Gryaznov, and N. Talebanfard}
\keywords{VC-dimension, Hypergraph, Clique, Affine Disperser, Circuit} 
\begin{document}

\maketitle

\begin{abstract}
We introduce the following variant of the VC-dimension. Given $S \subseteq \Q^n$ and a positive integer $d$, we define $\un_d(S)$ to be the size of the largest subset $I \subseteq [n]$ such that the projection of $S$ on every subset of $I$ of size $d$ is the $d$-dimensional cube. We show that determining the largest cardinality of a set with a given $\un_d$ dimension is equivalent to a Tur{\'a}n-type problem related to the total number of cliques in a $d$-uniform hypergraph. This allows us to beat the Sauer--Shelah lemma for this notion of dimension. We use this to obtain several results on $\Sigma_3^k$-circuits, i.e., depth-$3$ circuits with top gate OR and bottom fan-in at most $k$:

\begin{itemize}
\item Tight relationship between the number of satisfying assignments of a $2$-CNF and the dimension of the largest projection accepted by it, thus improving Paturi, Saks, and Zane (Comput. Complex. '00).

\item Improved $\Sigma_3^3$-circuit lower bounds for affine dispersers for sublinear dimension. Moreover, we pose a purely hypergraph-theoretic conjecture under which we get further improvement.

\item We make progress towards settling the $\Sigma_3^2$ complexity of the inner product function and all degree-$2$ polynomials over $\bbF_2$ in general. The question of determining the $\Sigma_3^3$ complexity of $\IP$ was recently posed by Golovnev, Kulikov, and Williams (ITCS'21).
\end{itemize}
\end{abstract}

\section{Introduction}

Boolean circuits provide a natural model for computing Boolean functions. Given a Boolean function $f$ in variables $x_1, \ldots, x_n$, a circuit is a sequence $C = \langle g_1, \ldots, g_t\rangle$ of functions where each $g_i$ is either an input variable, its negation, or $g_i = g(g_{i_1}, g_{i_2})$ where $i_1, i_2 < i$ and $g$ is an arbitrary Boolean function. The output of the circuit on an input $x$ is given by the last function, that is, $C(x) = g_t(x)$. The size of the circuit $C$ is $t$, the length of the sequence. A well-known simple counting argument due to Shannon shows that almost all Boolean functions in $n$ variables require circuits of size $2^n/n$ (see~\cite{MR2895965} for a proof and more background on circuit complexity). Despite this fact, the best known circuit size lower bound for an explicit function is barely above $3n$~\cite{DBLP:conf/focs/FindGHK16}. It is known that the arguments used in this result, the so-called gate elimination technique, cannot even yield a lower bound of $5n$~\cite{DBLP:journals/jcss/GolovnevHKK18}. Thus, a super-linear size lower bound even for logarithmic depth circuits would make a breakthrough in the formidable wall of complexity theory.

\subsection{Valiant's program and depth-3 circuits}

Another natural model of computation is bounded-depth circuits with unbounded fan-in. Let $\Sigma_3^k$ denote the class of depth-$3$ circuits of the form $\mathrm{OR} \circ \mathrm{AND} \circ \mathrm{OR}$ with bottom fan-in at most $k$. Equivalently we can view  a $\Sigma_3^k$-circuit as an unbounded disjunction of $k$-CNF formulas. Valiant~\cite{MR0660702} formulated a program to prove super-linear lower bounds for $O(\log n)$-depth circuits. He showed that a lower bound of $2^{\omega(n/\log\log n)}$ for $\Sigma_3^{n^{\epsilon}}$ circuits for some $\epsilon > 0$ implies a super-linear lower bound for $O(\log n)$-depth fan-in 2 circuits. Furthermore, he showed that if we restrict the depth-3 circuits to $\Sigma_3^{O(1)}$ then a lower bound larger than $2^{n/2}$ implies a super-linear lower bound for series-parallel circuits. This gives a strong motivation to prove $\Sigma_3^k$-circuit lower bounds for a fixed function for every constant $k$. In this direction Paturi, Pudl{\'a}k, and Zane~\cite{DBLP:conf/focs/PaturiPZ97} proved a lower bound of $\Omega(2^{n/k})$ for the parity function. Later Paturi, Pudl{\'a}k, Saks, and Zane~\cite{DBLP:journals/jacm/PaturiPSZ05} using similar but stronger techniques gave a lower bound of $\Omega(2^{n\pi^2/6k})$ for the characteristic function of the BCH code. This remains the best known result of this type.

Recently Golovnev, Kulikov, and Williams~\cite{DBLP:conf/innovations/GolovnevKW21} in an insightful revisiting of Valiant's program showed among other things that a lower bound of $2^{n - o(n)}$ for $\Sigma_3^{16}$-circuits implies a $3.9n - o(n)$ size lower bound for unrestricted circuits. This is a significant result since as we mentioned earlier the best lower bounds for unrestricted circuits are much weaker. This gives a strong motivation to study $\Sigma_3^k$-circuits for small values of $k$, and in fact, a result in this direction is known. Paturi, Saks, and Zane~\cite{DBLP:journals/cc/PaturiSZ00} proved a $2^{n - o(n)}$ lower bound for $k = 2$. For $k \ge 3$ no such bounds are known.

\subparagraph*{Lower bound arguments.} Let us briefly recall the general strategy in $\Sigma_3^k$-circuit lower bounds. Let $f$ be our hard function. If we can show that any $k$-CNF formula $F$ which is consistent with $f$, that is $F(x) \le f(x)$ for all $x$, has at most $R$ satisfying assignments, then it follows that $f$ requires $\Sigma_3^k$-circuits of size at least $\card{f^{-1}(1)}/R$. This is because a $\Sigma_3^k$-circuit computing $f$ gives a covering of $f^{-1}(1)$ using the sets of satisfying assignments of $k$-CNF formulas which are consistent with $f$.

The specific execution of this argument for $\Sigma_3^2$ is as follows. Given a CNF formula $\phi$, we denote by $\sat(\phi)$ the set of satisfying assignments of $\phi$.~\cite{DBLP:journals/cc/PaturiSZ00} showed that if $S = \sat(\phi)$ for a $2$-CNF formula $\phi$ in $n$ variables and $\card{S} = 2^{\Omega(n)}$, then $S$ contains a \textit{projection} of dimension $\Omega(n)$. A projection is simply an affine space defined by equations of the form $x = 0$, $x = 1$, $x = y$ or $x = 1 - y$. Thus, if we have a function $f$ such that $f^{-1}(1)$ has size at least $2^{n - o(n)}$ and does not contain any projection of linear dimension, then $f$ requires $\Sigma_3^2$-circuits of size $2^{n - o(n)}$. It turns out that explicit constructions of even more general such functions exist. These are called \textit{affine dispersers} for sublinear dimension, functions which are not constant under any affine space of some $o(n) $ dimension (see, e.g.,~\cite{DBLP:journals/siamcomp/Ben-SassonK12})\footnote{Note that these functions were explicitly constructed more  than a decade after~\cite{DBLP:journals/cc/PaturiSZ00} appeared. That paper got around this by constructing a disperser from a pseudo-random distribution.}.
One may ask if it is possible to extend the above result regarding projections to $k$-CNFs for $k \ge 3$ and thus obtain $\Sigma_3^k$ lower bounds. However,~\cite{DBLP:journals/cc/PaturiSZ00} showed that there are $4$-CNFs with exponentially many satisfying assignments which have only projections of constant dimension (later we will show that there are even $3$-CNFs with this property). This limits the applicability of projections to $k$-CNFs for $k \ge 3$, but it is conceivable that every sufficiently large set of satisfying assignments of a $k$-CNF contains a large \emph{general} affine subspace.


\subparagraph*{Affine dispersers.} 
Given $S \subseteq \Q^n$ let $\af(S)$ denote the dimension of the largest affine space contained in $S$.
Given $c$ define
\[
  \operatorname{af}_k(c) \coloneqq \inf_{n} \min \left\{ \af(\sat(\phi))/n : \text{$\phi$ is $k$-CNF in $n$ variables, $\card{\sat(\phi)} \ge 2^{cn}$} \right\}.
\]
Note that $\operatorname{af}_k(0) = 0$ and $\operatorname{af}_k(1) = 1$.
For every $k$ define $c_k$ to be the infimum of $c$ for which $\operatorname{af}_k(c) > 0$. Observe that an affine disperser for sublinear dimension requires $\Sigma_3^k$-circuits of size $2^{(1 - c_k)n - o(n)}$ (we may assume that our function has $2^{n - o(n)}$ ones, otherwise the negation of the function, which is also an affine disperser, does). In particular, if $c_k = 0$ for every $k$ then we get superlinear lower bounds for series-parallel circuits, and if $c_{16} = 0$ then affine dispersers require general circuits of size $3.9n - o(n)$. Interestingly the state-of-the-art general circuit lower bounds are proved for such affine dispersers~\cite{DBLP:conf/focs/FindGHK16}. Thus finding upper bounds on $c_k$ is a justified direction to explore. So far we know only that $c_2 = 0$~\cite{DBLP:journals/cc/PaturiSZ00}.
For arbitrary $k$ the best upper bound to our knowledge can be easily inferred from the Switching Lemma (see~\cite{DBLP:journals/tcs/MiltersenRW05, DBLP:conf/coco/Rossman19}):
any $k$-CNF in $n$ variables has a decision tree representation of size $2^{(1 - 1/Ck)n}$, where $C > 1$ is a universal constant, and thus a $k$-CNF that accepts significantly more than $2^{(1 - 1/Ck)n}$ assignments, in particular, accepts a large subcube, which is the simplest form of an affine space. The constant $C$ comes from the constant appearing in the Switching Lemma which can be set to 10. It follows that $c_k \le 1 - \frac{1}{10k}$.

\subsection{Our contributions}

We introduce a variant of the VC-dimension which allows better Sauer--Shelah type lemmas~\cite{DBLP:journals/jct/Sauer72, MR307903}. Recall that for a set $S \subseteq \Q^n$ the VC-dimension of $S$, $\vc(S)$, is defined to be the size of the largest subset $I \subseteq [n]$ such that $S$ projected on coordinates in $I$ is the $\card{I}$-dimensional cube. This is a fundamental concept from learning theory~\cite{vc-ucrfep-71} which is also extensively studied in combinatorics (see, e.g.,~\cite{MR319773, DBLP:journals/jct/BollobasR95, DBLP:journals/jlms/AlonMS19}). It is also used in circuit complexity (see, e.g.,~\cite{DBLP:journals/cc/PaturiSZ00, DBLP:journals/jcss/ImpagliazzoPZ01} for depth-3, and~\cite{DBLP:journals/cc/EdmondsIRS01, DBLP:journals/cc/Meir20} for general circuits). Applications of the VC-dimension usually go through the Sauer--Shelah lemma which states that if $\card{S} > \sum_{i = 0}^r\binom{n}{i}$ then $\vc(S) \ge r + 1$. This bound is tight and it is sufficient for most applications since it implies that if $\card{S} \ge 2^{\Omega(n)}$ then $\vc(S) \ge \Omega(n)$. However, this bound cannot guarantee the VC-dimension to be bigger than $n/2$ for sets of size $2^{\Omega(n)}$. To see this consider the set of all $n$-bit strings with Hamming weight at most $n/2$. This set has size $2^{n - 1}$ but VC-dimension only $n/2$.

\subparagraph*{A variant of the VC-dimension.}
The variant we consider is very natural.
Given a set $S \subseteq \Q^n$ and a positive integer $d$, $\un_d(S)$ is the size of the largest subset $I \subseteq [n]$ such that the projection of $S$ to every subset of $I$ of size $d$ is the $d$-dimensional cube. We show that the size of the largest set $S \subseteq \Q^n$ with $\un_d(S) = r$ is the same as the maximum number of cliques in an $n$-vertex $d$-uniform hypergraph with no clique of size $r + 1$. Luckily for $d = 2$ this quantity can be computed exactly from a generalization of Tur{\'a}n's theorem due to Zykov, and it turns out to be ${\left(\frac{n}{r} + 1\right)}^r$. Note that this immediately overcomes the $n/2$ barrier of the VC-dimension mentioned before: for the above example we have $\un_{n/2}$ dimension exactly $n$ and in general for every $\epsilon > 0$ there exists $\delta < 1$ such that if $\card{S} > 2^{\delta n}$ then $\un_2(S) \ge (1 - \epsilon)n$. For larger values of $d$ we can determine this bound when $r \ge (1 - 1/d)n$. For other values, we state a conjecture that extends Zykov's theorem to $d$-uniform hypergraphs.

\subparagraph*{Applications.} We obtain several results regarding depth-$3$ circuits.

\begin{itemize}
\item Bottom fan-in $2$: The first application is a tightening of~\cite{DBLP:journals/cc/PaturiSZ00} relating the dimension of the largest projection contained in the set of satisfying assignments of a $2$-CNF and its size. This allows us to obtain the following results.

\begin{itemize}
\item \textit{Lower bounds for weaker affine dispersers:} We prove lower bounds on the size of $\Sigma_3^2$ circuits for affine dispersers for linear dimension. This is interesting since~\cite{DBLP:journals/cc/PaturiSZ00} does not give anything for affine dispersers for dimension bigger than $n/2$.

\item \textit{Progress on the complexity of the inner product function (IP):}
The general strategy for proving $\Sigma_3^k$ circuit lower bounds described above does not give optimal bounds for some functions, notably the inner product function $\IP$. \cite{DBLP:conf/innovations/GolovnevKW21} also poses the question of proving tight bottom fan-in $3$ lower bounds for $\IP$. But tight lower bounds are not known even for bottom fan-in $2$ and here we focus on this case. We show that any $2$-CNF consistent with the $\IP$ on $n$ variables accepts at most $3^{n/2}$ assignments and this is tight. Thus, we obtain a $\Sigma_3^2$-circuit size lower bound of $2^{0.20n}$ that is worse than the best known $2^{0.25n}$ lower bound, which follows from a reduction to parity. However, we show that there is a unique $2$-CNF consistent with IP achieving this bound. This suggests that an alternative approach to lower bounds, namely the stability approach, might be fruitful. Stability results show that a large set avoiding a certain forbidden structure looks very similar to the unique extremal set (see, e.g.,~\cite{MR2419936,DBLP:journals/jct/Furedi15}). In circuit complexity we are only aware of one such result, which can be found in a work of Dinur and Meir~\cite{DBLP:journals/cc/DinurM18}. Perhaps it is possible to show that a $2$-CNF consistent with $\IP$ which has many satisfying assignments has a particular structure, and this might allow us to prove the desired lower bound.

\item \textit{Complexity of degree-2 polynomials over $\bbF_2$:} We show that any such polynomial in $n$ variables requires $\Sigma_3^2$-circuits of size $2^{n/10}$. Impagliazzo, Paturi, and Zane~\cite{DBLP:journals/jcss/ImpagliazzoPZ01} showed that almost all degree-2 polynomials require $\Sigma_3^k$-circuits of size $2^{n - o(n)}$ for $k = O(1)$. Thus developing lower bound arguments for these functions contributes to the program of finding explicit hard degree-2 polynomials. The complexity of these functions has been studied previously for depth-3 circuits with XOR at bottom by Cohen and Shinkar~\cite{DBLP:conf/innovations/CohenS16}.
\end{itemize}

\item Bottom fan-in 3: Assuming that a $3$-CNF has sufficiently many satisfying assignments we give a large projection contained in the set of satisfying assignments which also yields a $\Sigma_3^3$ lower bound for affine dispersers. This follows from our lower bound on $\un_3$ for sufficiently large sets. In particular, it implies that $c_3 \le \frac{\log 7}{3} \simeq 0.936$. Note that this is less than the $\frac{29}{30} \simeq 0.966$ bound which follows from the Switching Lemma. Although this improvement is modest, the underlying conceptual arguments seem to provide new insight. Our technique poses a Tur{\'a}n-type conjecture for hypergraphs which, if true, would imply $c_3 \le 0.707$.

\end{itemize}

\section{The \texorpdfstring{$\un_d$}{U\_d} dimension}

\begin{definition}
Let $\calF \subseteq 2^{[n]}$ be a set system and let $I \subseteq [n]$. The \emph{trace} of $\calF$ on $I$ is defined by $\Tr_{\calF}(I) \coloneqq \{A \cap I : A \in \calF\}$. Equivalently, viewing $\calF$ as a subset of $\Q^n$, $\Tr_{\calF}(I)$ is the set of distinct vectors obtained by projecting $\calF$ on the coordinates in $I$.
\end{definition}

\begin{definition}
Let $\calF \subseteq 2^{[n]}$ be a set system. We say that $I \subseteq [n]$ is \emph{shattered} if $\card{\Tr_{\calF}(I)} = 2^{\card{I}}$. Given $\calF$  the VC-dimension of $\calF$, denoted by $\vc(\calF)$, is the size of the largest shattered set.
\end{definition}

\begin{definition}[$d$-Universality]
  Let $\calF \subseteq 2^{[n]}$ be a set system and $d$ a positive integer. We say that $I \subseteq [n]$ is \emph{$d$-universal} for $\calF$ if $\card{I} \ge d$ and every $J \subseteq I$ with $\card{J} = d$ is shattered. We say that $\calF$ has \emph{property $U(r, d)$} if there exists $I \subseteq [n]$ of size $r$ which is $d$-universal. We denote by $u(n, r, d)$ the cardinality of the largest system of subsets of $[n]$ which does not have property $U(r + 1, d)$. We write $\un_d(\calF)$ to denote the size of the largest $d$-universal set for $\calF$.
\end{definition}

It immediately follows from the definition that if $\vc(\calF) \ge d$ then $\un_d(\calF) \ge \vc(\calF)$. To prove an upper bound on $u(n, r, d)$ we observe that it is sufficient to consider downward closed systems. We adopt the squashing argument of Frankl~\cite{DBLP:journals/jct/Frankl83}.

\begin{lemma}\label{lm:downclosed}
  Let $\calF \subseteq 2^{[n]}$ be a set system not having property $U(r+1, d)$ such that $\sum_{A \in \calF} \card{A}$ is minimized over all such families of cardinality $\card{\calF}$. Then $\calF$ is a downward closed family.
\end{lemma}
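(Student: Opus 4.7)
The plan is a standard shifting (squashing) argument. Suppose, for contradiction, that the extremal family $\calF$ is not downward closed. Then there exist $A \in \calF$ and some element $i \in A$ with $B \coloneqq A \setminus \{i\} \notin \calF$. I will apply the shift operator $S_i$ in direction $i$: for every $A \in \calF$ with $i \in A$ such that $A \setminus \{i\} \notin \calF$, replace $A$ by $A \setminus \{i\}$; leave all other sets of $\calF$ untouched. Call the resulting family $\calF'$. By construction, $|\calF'| = |\calF|$ (the map is a bijection on the modified sets), and since at least one replacement occurs (namely $A \mapsto B$), we have $\sum_{A' \in \calF'} |A'| < \sum_{A \in \calF} |A|$. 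If I can show that $\calF'$ still fails $U(r+1,d)$, this contradicts the minimality of $\calF$, completing the proof.

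It therefore remains to verify that shifting does not create property $U(r+1,d)$. This is an immediate consequence of the classical fact that $S_i$ never creates new shattered sets: if $J \subseteq [n]$ is shattered by $\calF'$, then $J$ is shattered by $\calF$. I would prove this (or quote it) as follows. If $i \notin J$, then $\Tr_{\calF'}(J) = \Tr_{\calF}(J)$ because removing $i$ from a set does not affect its intersection with $J$. If $i \in J$, suppose $J$ is shattered by $\calF'$; I need to show $J$ is shattered by $\calF$. Given any target $T \subseteq J$, use shatteredness in $\calF'$ to find some $A' \in \calF'$ with $A' \cap J = T$. If $A' \in \calF$, we are done; otherwise $A' = A'' \setminus \{i\}$ for some $A'' \in \calF$ with $A'' \setminus \{i\} \notin \calF$. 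A short case analysis on whether $i \in T$ (together with realising $T \cup \{i\}$ or $T \setminus \{i\}$ in $\calF'$ and tracing the preimage) exhibits a set of $\calF$ with the desired trace $T$. The upshot is that the collection of shattered sets can only shrink under shifting.

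Applying the lemma to the witness set: if $\calF'$ had property $U(r+1, d)$, witnessed by some $I \subseteq [n]$ of size $r+1$, then every $d$-subset $J \subseteq I$ would be shattered by $\calF'$ and hence, by the above, by $\calF$. But then $I$ would witness $U(r+1,d)$ for $\calF$ as well, contradicting our hypothesis on $\calF$. Hence $\calF'$ does not have $U(r+1,d)$, and $\calF'$ is a strictly better candidate than $\calF$, contradicting minimality. Therefore no such $A, i$ exist and $\calF$ is downward closed.

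The main obstacle is the case analysis establishing that $S_i$ does not create new shattered sets when $i \in J$; everything else (bijectivity of the shift, strict decrease of the sum of sizes, reduction to the shift lemma) is bookkeeping. Since this shifting fact is the one used in \cite{DBLP:journals/jct/Frankl83}, I would cite it directly rather than reprove it in the paper.
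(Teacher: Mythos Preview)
Your proposal is correct and essentially identical to the paper's proof: both apply the down-shift $S_i$ to a non-downward-closed extremal family, note that cardinality is preserved while total weight strictly drops, and then show that $S_i$ cannot create a new $d$-universal $(r+1)$-set by arguing that no new shattered $d$-set appears. The paper spells out the case $i \in J$ by picking a witness $E \in \calF'$ for the trace $S \cup \{i\}$ and observing that necessarily $E \in \calF$ and $E \setminus \{i\} \in \calF$ (so both $S$ and $S\cup\{i\}$ are realised in $\calF$), which is precisely the content of your sketched case analysis.
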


\begin{proof}
Assume for a contradiction that $\calF$ is not downward closed. Then there exists $A \in \calF$ and $i \in [n]$ such that $A \setminus \{i\} \not \in \calF$. For any $B \subseteq [n]$ we define
\begin{equation*}
B' \coloneqq \begin{cases}
        B \setminus \{i\}  & \text{if $i \in B$ and $B \setminus \{i\} \not \in \calF$}\\
        B & \text{otherwise.}
\end{cases}
\end{equation*}
We now define $\calF' \coloneqq \{B' : B \in \calF\}$. Note that $\card{\calF'} = \card{\calF}$ and since $A' = A \setminus \{i\}$, $\sum_{C
\in \calF'}\card{C} < \sum_{D \in \calF}\card{D}$. Therefore, by the minimality assumption, $\calF'$ has property $U(r + 1, d)$ and hence there exists $I \subseteq [n]$ with $\card{I} = r + 1$, which is $d$-universal for $\calF'$. We will show that $I$ is $d$-universal also for $\calF$, which is a contradiction. Since $\calF$ and $\calF'$ agree on all elements except for $i$, we may assume that $i \in I$. By the same reasoning $I \setminus \{i\}$ is $d$-universal for $\calF$. Therefore, it remains to show that for any $J \subseteq I \setminus \{i\}$ with $\card{J} = d-1$, $\card{\Tr_{\calF}(J \cup \{i\})} = 2^d$. We will show that for any $S \subseteq J$, we have that both $S$ and $S \cup \{i\}$ are in $\Tr_{\calF}(J \cup  \{i\})$. By $d$-universality $S \cup \{i\} \in \Tr_{\calF'}(J \cup \{i\})$ and hence there exists $E \in \calF'$ such that $S \cup \{i\} = E \cap (J \cup \{i\})$. Since $i\in E$, by construction of $\calF'$, it follows that $E \in \calF$ and hence $S \cup \{i\} \in \Tr_{\calF}(J \cup \{i\})$. Furthermore, again since $i \in E$ and by construction of $\calF'$, $E \setminus \{i\} \in \calF$. Since $S = (E \setminus \{i\}) \cap (J \cup \{i\})$, we have $S \in \Tr_{\calF}(J \cup \{i\})$.
\end{proof}

Given a $d$-uniform hypergraph (or a $d$-graph) $H = (V, E)$, a clique $S \subseteq V$ is a subset of vertices such that either $\card{S} < d$ or if $\card{S} \ge d$ then any subset of $S$ of size $d$ is a hyperedge in $E$. Analogously, $S$ is an independent set if it does not contain any hyperedge. We denote the $d$-uniform clique of size $t$ by $K^d_t$. Let us denote by $k(n, r, d)$ the maximum number of cliques in a $K^d_{r + 1}$-free $d$-graph on $n$ vertices.

\begin{lemma}\label{thm:ud}
For every $n \ge r \ge d$, $u(n, r, d) = k(n, r, d)$.
\end{lemma}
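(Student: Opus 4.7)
The plan is to establish the equality by two matching correspondences between downward closed families lacking property $U(r+1, d)$ and $K^d_{r+1}$-free $d$-uniform hypergraphs on $[n]$. By Lemma~\ref{lm:downclosed}, I may assume the extremal family $\calF$ is downward closed. For such $\calF$, the key observation is that a set $J \subseteq [n]$ is shattered iff $J \in \calF$: if $J \in \calF$, every $S \subseteq J$ also lies in $\calF$ by downward closure and satisfies $S \cap J = S$; conversely, shattering forces $J$ itself to appear as a trace, giving some $A \in \calF$ with $J \subseteq A$, whence $J \in \calF$ again by downward closure. In particular, the $d$-element shattered subsets of $[n]$ coincide with the $d$-element members of $\calF$.

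For the upper bound $u(n, r, d) \le k(n, r, d)$, I would associate to a downward closed $\calF$ the $d$-graph $H(\calF)$ whose edges are exactly the $d$-element members of $\calF$. By the observation above, a set $I$ with $\card{I} \ge d$ is $d$-universal for $\calF$ iff every $d$-subset of $I$ is an edge of $H(\calF)$, i.e., iff $I$ is a clique in $H(\calF)$. Thus $\calF$ lacks property $U(r+1, d)$ iff $H(\calF)$ contains no $K^d_{r+1}$. Moreover, every $S \in \calF$ is a clique of $H(\calF)$: trivially if $\card{S} < d$, and by downward closure together with the edge definition if $\card{S} \ge d$. Hence $\card{\calF}$ is at most the total number of cliques of $H(\calF)$, which is bounded by $k(n, r, d)$.

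For the matching lower bound $k(n, r, d) \le u(n, r, d)$, I would take a $K^d_{r+1}$-free $d$-graph $H$ on $[n]$ achieving $k(n, r, d)$ cliques and set $\calF$ to be the family of all cliques of $H$. This family is automatically downward closed, and the same correspondence shows that the $d$-universal sets for $\calF$ are exactly the cliques of $H$ of size at least $d$, none of which has size $r+1$. Therefore $\calF$ lacks $U(r+1, d)$ and witnesses $u(n, r, d) \ge \card{\calF} = k(n, r, d)$.

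There is no serious obstacle in this argument; the main care will lie in checking the shattering-equals-membership equivalence precisely and in bookkeeping the cliques of size below $d$ (which are automatic) separately from the cliques of size at least $d$ (which require the downward closure hypothesis) when matching $\card{\calF}$ against the number of cliques of $H(\calF)$.
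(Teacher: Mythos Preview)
Your proposal is correct and follows essentially the same approach as the paper: reduce to downward closed families via Lemma~\ref{lm:downclosed}, take the $d$-layer as the edge set of a $d$-graph, and identify $d$-universal sets with cliques to get both inequalities. Your explicit ``shattered $\Leftrightarrow$ member'' observation is a slightly cleaner packaging of the same point the paper uses implicitly.
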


\begin{proof}
To show the lower bound, let $H = ([n], E)$ be a $K^d_{r+1}$-free $d$-graph achieving the maximum number of cliques. We define
\[
  \calF \coloneqq \{S \subseteq [n] : \text{$S$ is a clique in $H$}\}.
\]

Note that by construction $\calF$ is downward closed. Assume for a contradiction that there exists $I \subseteq [n]$ of size $r + 1$ which is $d$-universal for $\calF$. By $d$-universality and downward closedness, every subset of $I$ of size $d$ is in $\calF$ which implies that $I$ is a clique in $H$.

In the other direction let $\calF$ be a system of maximum size not having property $U(r + 1, d)$. By \cref{lm:downclosed} we may assume that $\calF$ is downward closed. We define a $d$-graph $H = ([n], E)$ as follows:
\[
  E \coloneqq \{S \in \calF: \card{S} = d\}.
\]

Since $\calF$ is downward closed, any clique $S \subseteq [n]$ is $d$-universal for $\calF$. Therefore, $H$ is $K^d_{r + 1}$-free. Note furthermore that each $S \in \calF$ gives a clique in $H$. Thus, the size of $\calF$ is bounded by the total number of cliques in $H$.
\end{proof}

Using \cref{thm:ud} and a generalization of Tur{\'a}n's Theorem, which has been rediscovered many times, we can determine $u(n, r, 2)$ precisely. Recall that the \emph{Tur{\'a}n graph} $T_{n, r}$ is the complete $n$-vertex $r$-partite graph with parts of sizes as equal as possible. 

\begin{theorem}[Zykov~\cite{zykov1949some}, Sauer~\cite{SAUER1971109}, Alekseev~\cite{Alekseev}]\label{thm:zyk}
Let $G$ be a $K_{r + 1}$-free graph on $n$ vertices. Then $k(G) \le k(T_{n, r}) \le {(\frac{n}{r}+1)}^r$.
\end{theorem}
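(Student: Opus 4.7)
The plan is to prove \cref{thm:zyk} by Zykov's symmetrization argument, reducing to the case of complete multipartite graphs, and then optimizing the part sizes via AM--GM.

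First, I would show that the maximum of $k(G)$ over $K_{r+1}$-free graphs on $n$ vertices is attained by a complete multipartite graph. To that end, suppose $u, v$ are two non-adjacent vertices in an extremal $G$, and let $k_u$ (respectively $k_v$) denote the number of cliques containing $u$ (respectively $v$). Without loss of generality assume $k_u \ge k_v$. Form $G'$ from $G$ by deleting $v$ and adding a new vertex $v'$ whose neighborhood equals that of $u$. Then $G'$ is still $K_{r+1}$-free: any clique of $G'$ containing $v'$ corresponds to a clique of $G$ containing $u$ (with $v'$ in the role of $u$), and $u, v'$ are non-adjacent in $G'$, so no new $(r+1)$-clique is created. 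Moreover, $k(G') \ge k(G)$, since cliques not using $v$ are preserved and cliques containing $v$ in $G$ are replaced (in number) by at least $k_u$ cliques containing $v'$. Iterating this symmetrization, non-adjacency becomes an equivalence relation, and the resulting extremal graph $G^*$ is complete multipartite.

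Next, since $G^*$ is $K_{r+1}$-free and complete multipartite, it has at most $r$ parts; write $G^* = K_{n_1, \ldots, n_s}$ with $s \le r$ and $n_1 + \cdots + n_s = n$. A clique in a complete multipartite graph is exactly a transversal: a choice of at most one vertex from each part. Hence
\[
  k(G^*) = \prod_{i=1}^{s}(n_i + 1),
\]
counting the empty clique as the all-zero choice. Padding with zero-sized parts if necessary, we may assume $s = r$.

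Finally, I would apply AM--GM to the $r$ positive reals $n_1+1, \ldots, n_r+1$:
\[
  \prod_{i=1}^{r}(n_i+1) \le \left(\frac{\sum_{i=1}^{r}(n_i+1)}{r}\right)^{r} = \left(\frac{n+r}{r}\right)^{r} = \left(\frac{n}{r} + 1\right)^{r},
\]
which gives the second inequality. For the first inequality, the integer maximum of $\prod(n_i+1)$ subject to $\sum n_i = n$ is achieved when the $n_i$ differ by at most $1$, which is exactly the Tur\'an graph $T_{n,r}$, so $k(G) \le k(G^*) \le k(T_{n,r})$.

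The main obstacle is verifying that symmetrization actually preserves $K_{r+1}$-freeness and does not decrease the clique count; the subtlety is that cliques of $G$ containing both $u$ and $v$ become impossible in $G'$ (since there is no edge $uv$ and no edge $uv'$), so one must check that the gain from duplicating $u$'s contribution compensates for the loss at $v$. This follows because cliques containing $v$ in $G$ are in bijection with cliques of $G-v$ of a certain form that also extend $u$ (using non-adjacency of $u,v$), giving the inequality $k(G') - k(G) \ge k_u - k_v \ge 0$.
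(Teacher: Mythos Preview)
The paper does not prove this theorem; it is quoted as a classical result with citations to Zykov, Sauer, and Alekseev, and then applied immediately. So there is no ``paper's own proof'' to compare against.

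Your proposal is the standard Zykov symmetrization argument and is essentially correct. Two small points are worth tightening.

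First, the sentence ``iterating this symmetrization, non-adjacency becomes an equivalence relation'' hides the usual termination issue. A single replacement step need not strictly increase $k(\cdot)$, so naive iteration can cycle. The clean fix is the three-vertex version: if $G$ is extremal but not complete multipartite, pick $u\not\sim v$, $v\not\sim w$, $u\sim w$; if $k_v\ge\max(k_u,k_w)$, replace both $u$ and $w$ by copies of $v$ and compute $k(G')-k(G)=2k_v-k_u-k_w+k_{u,w}\ge k_{u,w}\ge 1$ (since $\{u,w\}$ is a clique), a strict increase and hence a contradiction; otherwise one of $k_u,k_w$ strictly exceeds $k_v$ and a single replacement already gives a strict increase. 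Either way the extremal graph must be complete multipartite.

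Second, your ``main obstacle'' paragraph is confused: since $u$ and $v$ are non-adjacent in $G$, there are \emph{no} cliques of $G$ containing both, so nothing is lost on that account. The clean bookkeeping is simply
\[
k(G') \;=\; \bigl(\text{cliques of }G\text{ avoiding }v\bigr) \;+\; \bigl(\text{cliques of }G'\text{ through }v'\bigr)
\;=\; \bigl(k(G)-k_v\bigr) + k_u,
\]
using that $N_{G'}(v')=N_G(u)$ (as $v\notin N_G(u)$), so cliques through $v'$ in $G'$ biject with cliques through $u$ in $G$. This gives $k(G')-k(G)=k_u-k_v\ge 0$ directly; no further bijection involving $v$ is needed.

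The remainder (at most $r$ parts, $k(K_{n_1,\ldots,n_r})=\prod_i(n_i+1)$, and AM--GM for the numerical bound, with integer optimum at the balanced Tur\'an partition) is correct as written.
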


Applying \cref{thm:ud} and \cref{thm:zyk} immediately implies the following.

\begin{theorem}\label{thm:utwo}
For every $n \ge r$, $u(n, r, 2) = k(T_{n, r}) \le {({\frac{n}{r}} + 1)}^r$. It follows that for every $\calF \subseteq 2^{[n]}$, $\card{\calF} \le {(\frac{n}{\un_2(\calF)} + 1)}^{\un_2(\calF)}$.
\end{theorem}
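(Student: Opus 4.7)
The plan is to deduce the theorem directly by combining the two preceding results, since all of the substantive work has already been done. First I would specialize \cref{thm:ud} to the case $d = 2$, which gives the identity $u(n, r, 2) = k(n, r, 2)$, where $k(n, r, 2)$ is the maximum number of cliques in a $K_{r+1}$-free graph on $n$ vertices. Then I would invoke \cref{thm:zyk}, which asserts that among all $K_{r+1}$-free graphs on $n$ vertices the Tur\'an graph $T_{n, r}$ maximizes the total number of cliques, and that this maximum is at most $(n/r + 1)^r$. Chaining these two facts together yields
\[
  u(n, r, 2) = k(n, r, 2) = k(T_{n, r}) \le \left(\frac{n}{r} + 1\right)^r,
\]
which is the first assertion.

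For the second assertion, I would unpack the definition of $u(n, r, d)$. Setting $r \coloneqq \un_2(\calF)$, by the very definition of $\un_2(\calF)$ the family $\calF$ does not have property $U(\un_2(\calF) + 1, 2)$; hence $\card{\calF}$ is at most the extremal quantity $u(n, \un_2(\calF), 2)$. Substituting the upper bound established in the previous paragraph then gives
\[
  \card{\calF} \le u(n, \un_2(\calF), 2) \le \left(\frac{n}{\un_2(\calF)} + 1\right)^{\un_2(\calF)},
\]
as desired.

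There is essentially no obstacle here: the combinatorial content has been isolated into \cref{lm:downclosed} and \cref{thm:ud} (the squashing argument and the bijection between downward closed families and cliques of a hypergraph) together with the classical Zykov--Sauer--Alekseev bound \cref{thm:zyk}. The only thing to be careful about is verifying that the definitional contrapositive is applied with the correct parameter, namely that the largest $d$-universal set has size strictly less than $\un_2(\calF) + 1$, which is immediate.
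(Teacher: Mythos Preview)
Your proposal is correct and matches the paper's approach exactly: the paper states that the theorem follows immediately from \cref{thm:ud} and \cref{thm:zyk}, which is precisely the chain of reasoning you spell out. Your additional care in unpacking the definition of $u(n,r,d)$ for the second assertion is sound and fills in detail the paper leaves implicit.
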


We now determine $u(n, r, d)$ when $r$ is sufficiently large. Note that by complementation $k(n, r, d)$ is the same as the maximum number of independent sets in an $n$-vertex $d$-graph with no independent set of size $r + 1$. Given a hypergraph $H = (V, E)$, a transversal $T \subseteq V$ is a subset of vertices such that every edge of $H$ contains at least one vertex from $T$. Denote by $i(n, r, d)$ the maximum number of independent sets in an $n$-vertex $d$-graph with no transversal of size $r - 1$. It is easy to see that
\begin{equation}\label{eqn:trans}
i(n, r, d) = k(n, n - r, d)
\end{equation}
since an $n$-vertex $d$-graph has no transversal of size $r - 1$ if and only if it does not have any independent set of size $n - r + 1$.

\begin{theorem}\label{thm:large-r}
  Let $r \le n/d$ and let $H = (V, E)$ be a $d$-graph on $n$ vertices with no transversal of size $r - 1$ and maximum possible number of independent sets. Then $H$ is the disjoint union of $r$ hyperedges and $n-rd$ isolated vertices. Consequently, $i(n, r, d) = 2^{n - rd}{(2^d - 1)}^r$.
\end{theorem}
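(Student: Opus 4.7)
My plan is to prove both directions of the claim: the construction achieves the claimed value, and extremality is characterized via an inductive structural argument.

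For the construction, let $H^*$ consist of $r$ pairwise disjoint $d$-edges $e_1, \ldots, e_r$ together with $n - rd$ isolated vertices. Every transversal of $H^*$ must hit each $e_i$ separately, hence $\tau(H^*) = r$, and independent sets are uniquely determined by a proper subset $T_i \subsetneq e_i$ for each $i$ together with an arbitrary $S_0 \subseteq V \setminus \bigcup_i e_i$, yielding $(2^d - 1)^r \cdot 2^{n-rd}$ independent sets.

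For the upper bound, I would first reduce to the edge-critical case: since the number of independent sets weakly decreases under edge additions, an extremal $H$ must satisfy $\tau(H) = r$ (otherwise a superfluous edge can be removed) and $\tau(H - e) = r - 1$ for every $e \in E$ (otherwise $H - e$ also satisfies the constraint and has at least as many independent sets). I would then induct on $r$: pick any edge $e \in E$ and partition independent sets by $T := S \cap e$, which must be a proper subset of $e$. This gives
\begin{equation*}
I(H) = \sum_{T \subsetneq e} I(H_T), \qquad H_T := \bigl(V \setminus e,\ \{e' \setminus e : e' \in E \setminus \{e\},\ e' \cap e \subseteq T\}\bigr).
\end{equation*}
Since any transversal $T'$ of $H_T$ extends to a transversal $T' \cup (e \setminus T)$ of $H$, one obtains $\tau(H_T) \ge r - d + \card{T}$; in particular, for $\card{T} = d - 1$ this yields $\tau(H_T) \ge r - 1$, matching the inductive hypothesis applied to $H_T$ on $n - d$ vertices.

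The main obstacle is that $H_T$ is non-uniform---its edges have sizes between $1$ and $d$---so the inductive hypothesis as stated does not apply directly, and a naive combination of bounds over the different $T \subsetneq e$ is too loose to recover the target $(2^d - 1)^r \cdot 2^{n - rd}$. The remedy I would pursue is to prove a strengthened inductive claim valid for hypergraphs with maximum edge size at most $d$, and to use the edge-criticality of $H$---which provides a transversal of $H - e$ of size $r - 1$ entirely disjoint from $e$---to control the smaller-$\card{T}$ contributions where the naive bound on $\tau(H_T)$ is weaker. Equality in the resulting summation would impose strong structural constraints on the $H_T$'s, from which one concludes (by iterating the argument over the edges of $H$) that $H$ itself must be a disjoint matching of size $r$ together with $n - rd$ isolated vertices. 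The hypothesis $rd \le n$ enters essentially here to rule out dense intermediate configurations---such as $K_4^d$ having $\tau = 2$ with no $2$-matching---from arising in the recursion.
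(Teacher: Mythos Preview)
Your decomposition $I(H) = \sum_{T \subsetneq e} \card{\calI_T}$ (with $\calI_T$ the independent sets whose trace on $e$ is $T$) is correct and is exactly the partition the paper uses. The gap is in how you close the induction. Passing to the non-uniform hypergraphs $H_T$ and then trying to strengthen the inductive statement to edge sizes $\le d$ is not carried out, and it is not clear it can be made to work cleanly: once small edges appear, the extremal configurations for ``$\tau \ge r'$'' are no longer obvious, the bound $\tau(H_T) \ge r - d + \card{T}$ is far too slack for small $\card{T}$, and your edge-criticality observation (a size-$(r-1)$ transversal of $H - e$ disjoint from $e$, which is indeed correct) is not shown to control those terms. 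As written, the proposal stops precisely at the hard step.

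The paper avoids non-uniformity altogether by removing a single \emph{vertex} rather than the whole edge. The key lemma is: for every edge $e$ there exists $u \in e$ with
\[
  i(H) \le \frac{2^d - 1}{2^{d-1}}\, i(H \setminus u),
\]
with equality iff no other edge meets $e$. This comes from the identity $\sum_{u \in e} i(H \setminus u) = \sum_{T \subsetneq e} (d - \card{T})\,\card{\calI_T}$ combined with a short lemma on monotone probability distributions over $2^{e}$ showing the right-hand side is at least $d\cdot\frac{2^{d-1}}{2^d-1}\, i(H)$. Since $H \setminus u$ is still $d$-uniform on $n-1$ vertices and (after the standard reduction you also perform) has $\tau(H \setminus u) \ge r - 1$, the induction closes in one line, and the equality characterization drops out from the equality case of the lemma. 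This averaging-over-vertices step is the idea your proposal is missing; it is what lets one stay inside the class of $d$-uniform hypergraphs throughout.
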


To prove this theorem we need the following auxiliary lemma.

\begin{lemma}\label{thm:mu_bound}
  Let $X$ be a set of size $d$. Consider a distribution $\mu$ on the subsets of $X$ with the following properties:
  \begin{enumerate}
    \item $\mu(X) = 0$.
    \item $\mu(F) \ge \mu(F')$ if $F \subseteq F'$.
  \end{enumerate}

  Then
  \[
    \E_{F \sim \mu} \left[\card{\overline{F}}\right] \ge d \frac{2^{d-1}}{2^d-1}.
  \]

  The equality holds if and only if $\mu(F) = \frac{1}{2^d-1}$ for every $F \subsetneq X$. In other words, $\E\limits_{F \sim \mu} \left[\card{\overline{F}}\right]$ is minimized if $\mu$ is the uniform distribution over all non-full sets.
\end{lemma}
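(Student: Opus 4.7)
The plan is to reduce to the case where $\mu$ is invariant under permutations of $X$ by averaging, and then solve the resulting one-dimensional linear program in closed form.

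For the symmetrization step, set $\bar\mu(F) \coloneqq \frac{1}{d!}\sum_\pi \mu(\pi F)$, where $\pi$ ranges over the permutations of $X$. Averaging preserves nonnegativity, the normalization $\sum \bar\mu = 1$, the condition $\bar\mu(X) = 0$, and the monotonicity hypothesis (if $F \subseteq F'$ then $\pi F \subseteq \pi F'$ for every $\pi$, so $\mu(\pi F) \ge \mu(\pi F')$); since $\card{\overline{F}}$ depends only on $\card{F}$, averaging preserves the objective: $\E_{\bar\mu}\card{\overline{F}} = \E_\mu\card{\overline{F}}$. Hence I may assume $\mu$ is symmetric and write $\mu_k$ for the common value of $\mu(F)$ on $\card{F} = k$. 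The hypotheses reduce to $\mu_d = 0$, $\mu_0 \ge \mu_1 \ge \cdots \ge \mu_{d-1} \ge 0$, and $\sum_{k=0}^{d-1}\binom{d}{k}\mu_k = 1$, while the identity $(d-k)\binom{d}{k} = d\binom{d-1}{k}$ rewrites the objective as $d\sum_{k=0}^{d-1}\binom{d-1}{k}\mu_k$.

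Passing to the difference variables $\nu_j \coloneqq \mu_j - \mu_{j+1} \ge 0$ for $0 \le j \le d-1$ (so $\mu_k = \sum_{j \ge k}\nu_j$) and swapping the order of summation turns the problem into the linear program: minimize $d\sum_j T_j\nu_j$ subject to $\sum_j S_j\nu_j = 1$, $\nu_j \ge 0$, where $S_j \coloneqq \sum_{k=0}^j\binom{d}{k}$ and $T_j \coloneqq \sum_{k=0}^j\binom{d-1}{k}$. Its minimum is $d\cdot\min_j(T_j/S_j)$, attained by concentrating all mass at a minimizing index $j^*$. Since $T_{d-1} = 2^{d-1}$ and $S_{d-1} = 2^d - 1$, the stated bound $d\cdot 2^{d-1}/(2^d-1)$ follows once we show $T_j/S_j$ is strictly decreasing in $j$.

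The monotonicity admits a clean probabilistic reading: fixing $x_0 \in X$, the quantity $T_j$ counts subsets of $X \setminus \{x_0\}$ of size at most $j$, so $T_j/S_j = \Pr[x_0 \notin F]$ for $F$ uniform on $\{G \subseteq X : \card{G} \le j\}$, equivalently $1 - \E[\card{F}]/d$. Enlarging $j$ adds subsets of size $j+1$, which are strictly larger than the prior average, so $\E[\card{F}]$ strictly increases and $T_j/S_j$ strictly decreases. Equality in the lemma then forces $\nu$ to be supported at $j = d - 1$, giving $\mu_k = 1/(2^d-1)$ for every $k < d$, i.e., $\bar\mu$ uniform on the non-full subsets. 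Tracing back through the symmetrization, the layer sums $\sum_{\card{F}=k}\mu(F) = \binom{d}{k}/(2^d-1)$ read off from $\bar\mu$ combined with the chain constraints $\mu(F) \le \mu(F')$ for $F \supseteq F'$ force $\mu(F) = 1/(2^d-1)$ at every $F \ne X$. The only delicate point in this plan is the strict monotonicity of $T_j/S_j$, which the probabilistic interpretation handles transparently.
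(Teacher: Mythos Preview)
Your argument is correct. Both proofs begin by passing to layer averages---you via explicit symmetrization over permutations, the paper by setting $\nu(i)=\binom{d}{i}^{-1}\sum_{|F|=i}\mu(F)$ and proving $\nu(i)\ge\nu(i+1)$ by a double-counting argument---so the reduction to a one-parameter problem is identical. The one-dimensional bounds diverge in execution: the paper rewrites the objective as $\sum_{j=0}^{d-1}\sum_{i\le j}\binom{d}{i}\nu(i)$ and lower-bounds each inner partial sum by $S_j/(2^d-1)$ using a separate Chebyshev-type correlation lemma (its \cref{fact:sequences}), whereas you change to the differences $\nu_j=\mu_j-\mu_{j+1}$, read off a linear program with optimum $d\cdot\min_j T_j/S_j$, and pin down the minimizer through the probabilistic identity $T_j/S_j=1-\E[\card{F}]/d$ for $F$ uniform on $\{G:\card{G}\le j\}$. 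Your route is self-contained and sidesteps the auxiliary lemma; the paper's keeps the slack at each threshold explicit. The equality analyses coincide in substance: both first deduce that the layer averages are constant and then push this back to the original $\mu$ via monotonicity along inclusion chains---your layer-by-layer induction from $\mu(\varnothing)=1/(2^d-1)$ upward (each set of size $k+1$ is bounded above by a size-$k$ subset already known to equal $1/(2^d-1)$, and the layer sum then forces equality) is exactly what the paper obtains from the equality case of its monotonicity claim.
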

\begin{proof}
  Let us denote by ${[X]}^i$ the set of subsets of $X$ of size $i$.
  Define $\nu$ on $\{0,1,\ldots,d\}$ as follows:
  \[
    \nu(i) \coloneqq \frac{\sum\limits_{F \in \setbinom{X}{i}} \mu(F)}{\binom{d}{i}}.
  \]

  \begin{claim}\label{lm:nu_decresing}
    For every $i$, $\nu(i) \ge \nu(i+1)$.
    The equality holds if and only if $\mu(F) = \mu(F')$ for every $F \subseteq F' \subsetneq X$, where $\card{F}=i$ and $\card{F'}=i+1$.
  \end{claim}
  \begin{claimproof}
    We need to show that
    \[
      \frac{\sum\limits_{F \in \setbinom{X}{i}} \mu(F)}{\binom{d}{i}} \ge \frac{\sum\limits_{F \in \setbinom{X}{i+1}} \mu(F)}{\binom{d}{i+1}},
    \]
    which is equivalent to
    \begin{equation}\label{eq:double_sum}
      (d-i) \sum\limits_{F \in \setbinom{X}{i}} \mu(F) \ge (i+1) \sum\limits_{F \in \setbinom{X}{i+1}} \mu(F).
    \end{equation}

    Consider the sum
    \[
      \sum\limits_{\substack{(F, F'), \text{ where } F \subseteq F', \\ F \in \setbinom{X}{i}, F' \in \setbinom{X}{i+1}}} \mu(F).
    \]

    Each $\mu(F)$, where $\card{F} = i$, appears exactly $d-i$ times since there are $d-i$ choices of $F'$ such that $F \subseteq F'$ and $\card{F'}=i+1$. Thus, it is equal to
    \begin{equation}\label{eq:s_upper}
      \sum\limits_{F \in \setbinom{X}{i}} (d-i) \mu(F).
    \end{equation}

    Similarly,
    \begin{equation}\label{eq:s_lower}
      \sum\limits_{\substack{(F, F'), \text{ where } F \subseteq F', \\ F \in \setbinom{X}{i}, F' \in \setbinom{X}{i+1}}} \mu(F') = \sum\limits_{F' \in \setbinom{X}{i+1}} (i+1) \mu(F').
    \end{equation}

    By the second property of $\mu$,~\eqref{eq:s_upper} is at least~\eqref{eq:s_lower}, which gives us~\eqref{eq:double_sum}.

    For the second part note that equality in~\eqref{eq:double_sum} holds if and only if $\mu(F) = \mu(F')$, where $F \subseteq F'$, $F \in \setbinom{X}{i}$, and $F' \in \setbinom{X}{i+1}$.
  \end{claimproof}

  By definition of $\nu$,
  \[
    \E_{F \sim \mu} \left[\card{\overline{F}}\right] = \sum\limits_{F \subseteq X} (d - \card{F}) \mu(F) = \sum\limits_{i=0}^d (d-i) \binom{d}{i} \nu(i).
  \]

  We can rewrite the last sum as
  \begin{equation}\label{eq:nu_expansion}
    \sum\limits_{j=0}^{d-1} \sum\limits_{i=0}^j \binom{d}{i} \nu(i).
  \end{equation}

  We need the following simple fact.
  \begin{lemma}\label{fact:sequences}
    Let ${\{a_i\}}_{i=1}^n$ be a sequence of non-decreasing numbers $a_1 \ge a_2 \ge \ldots \ge a_n$. Let ${\{b_i\}}_{i=1}^n$ be a sequence of non-negative numbers with $\sum\limits_{i=1}^n b_i > 0$. Then for every $1 \le m \le n$
    \[
      \sum\limits_{i=1}^m a_i b_i \ge \frac{\sum\limits_{i=1}^m b_i}{\sum\limits_{i=1}^n b_i} \sum\limits_{i=1}^n a_i b_i.
    \]
  \end{lemma}

  \begin{proof}
  	Define a random variable $X$ on $[n]$ which takes value $i$ with probability 
    \[\frac{b_i}{\sum\limits_{j = 1}^n b_j}.\]
    Observe that \[\E[a_X] = \frac{1}{\sum\limits_{i = 1}^n b_i}\sum\limits_{i = 1}^n a_i b_i,\]
    and \[\E[a_X|X \le m] = \frac{1}{\sum\limits_{i = 1}^m b_i}\sum\limits_{i = 1}^m a_i b_i.\] Using a simple coupling argument we show that $\E\left[a_X \;\middle|\; X \le m\right] \ge \E[a_X]$ which gives the result. We jointly sample $(A, B)$ such that $A$ is distributed as $X$ and $B$ is distributed as $X$ conditioned on $X \le m$. Furthermore, we guarantee that $B \ge A$ which by the assumption that $a_1 \ge \ldots \ge a_n$ implies $a_B \ge a_A$.

  	We first sample $A$. If $A \le m$ then we set $B = A$. Otherwise, we sample $B$ as $X$ conditioned on $X \le m$. It is easy to see that $(A, B)$ satisfies our requirements.
  \end{proof}

  Since $\mu$ is a distribution and $\mu(X) = 0$, we have
  \begin{equation}\label{eq:nu_total_sum}
    \sum\limits_{i=0}^{d-1} \binom{d}{i} \nu(i) = \sum\limits_{F \subsetneq X} \mu(F) = 1.
  \end{equation}

  From \cref{lm:nu_decresing}, \cref{fact:sequences} (for $a_i = \nu(i)$ and $b_i = \binom{d}{i}$), and~\eqref{eq:nu_total_sum} it follows that for every $0 \le j \le d-1$
  \begin{equation}\label{eq:nu_tail_small}
    \sum\limits_{i=0}^j \binom{d}{i} \nu(i) \ge \frac{\sum\limits_{i=0}^j \binom{d}{i}}{2^d-1}.
  \end{equation}

  Hence, we have the following lower bound on~\eqref{eq:nu_expansion}:
  \begin{equation}\label{eq:mu_bound}
    \sum\limits_{j=0}^{d-1} \sum\limits_{i=0}^j \binom{d}{i} \nu(i) \ge \frac{\sum\limits_{j=0}^{d-1} \sum\limits_{i=0}^j \binom{d}{i}}{2^d-1} = \frac{\sum\limits_{i=0}^d (d-i) \binom{d}{i}}{2^d-1} = d \frac{2^{d-1}}{2^d-1}.
  \end{equation}

  Observe that~\eqref{eq:mu_bound} is an equality if and only if for every $0 \le j \le d-1$~\eqref{eq:nu_tail_small} is an equality.
  Equivalently, for every $i$ we have $\nu(i) = \frac{1}{2^d-1}$.
  It follows from the second part of \cref{lm:nu_decresing} that $\mu(F) = \mu(F')$ for every $F \subseteq F' \subsetneq e$. In particular, for every $F \subsetneq e$ we have $\mu(F) = \mu(\varnothing) = \nu(0) = \frac{1}{2^d-1}$.
\end{proof}

This lemma can be used to prove the following result about the number of independent sets in a $d$-graph.
Let us denote by $i(H)$ the number of independent sets in a hypergraph $H$.

\begin{lemma}\label{lm:indsets_vertex_removal}
  Let $e = \{u_1,\ldots,u_d\}$ be an edge of a $d$-graph $H$. Then there exists $u \in e$ such that
  \[
    i(H) \le \frac{2^d-1}{2^{d-1}} i(H \setminus u).
  \]

  The equality holds if and only if every no other edge in $H$ intersects $e$.
\end{lemma}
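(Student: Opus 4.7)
The plan is to derive the inequality from the auxiliary \cref{thm:mu_bound} by letting $S$ be a uniformly random independent set of $H$ and studying the induced distribution $\mu$ on subsets of $e$ defined by $\mu(F) \coloneqq \Pr[S \cap e = F]$. Two conditions must be checked: $\mu(e) = 0$ holds since $e$ is an edge and thus $e \not\subseteq S$; and for any $F \subseteq F' \subsetneq e$, the map $S \mapsto S \setminus (F' \setminus F)$ is an injection from independent sets with trace $F'$ on $e$ into those with trace $F$ (independence is preserved under taking subsets, and the original $S$ can be recovered by adding back $F' \setminus F$), which gives $\mu(F) \ge \mu(F')$.

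With the hypotheses in place, \cref{thm:mu_bound} yields
\[
  \sum_{u \in e} \Pr[u \notin S] \;=\; \E_{F \sim \mu}\!\left[\card{\overline{F}}\right] \;\ge\; d \cdot \frac{2^{d-1}}{2^d - 1}.
\]
Averaging over $u \in e$, some $u^* \in e$ satisfies $\Pr[u^* \notin S] \ge 2^{d-1}/(2^d - 1)$. Since the independent sets of $H$ avoiding $u^*$ are exactly the independent sets of $H \setminus u^*$, we have $i(H \setminus u^*) = i(H) \cdot \Pr[u^* \notin S]$, which rearranges to the desired inequality $i(H) \le \frac{2^d - 1}{2^{d-1}} \, i(H \setminus u^*)$.

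For the equality characterization, the easy direction is that if no other edge meets $e$, then every independent set of $H$ decomposes uniquely as $F \sqcup T$ with $F \subsetneq e$ arbitrary and $T$ independent in the induced subhypergraph on $V \setminus e$; so $\mu$ is uniform on the $2^d - 1$ proper subsets of $e$, every $\Pr[u \notin S]$ equals $2^{d-1}/(2^d - 1)$, and every step is tight. Conversely, equality forces the averaging to be tight (all $\Pr[u \notin S]$ are equal) and \cref{thm:mu_bound} to be tight (so $\mu$ is uniform on proper subsets of $e$). The main obstacle is then ruling out any other edge $f$ meeting $e$: setting $A \coloneqq f \cap e \ne \emptyset$ and $B \coloneqq f \setminus e$, the set $B$ is independent with $B \cap e = \emptyset$ because $\card{B} = d - \card{A} < d$ is too small to contain any edge, and therefore $B$ contributes to $\mu(\emptyset)$; however, $B$ cannot lie in the image of the injection above from traces equal to $A$ down to traces equal to $\emptyset$, since its preimage would have to be $B \cup A = f$, which is an edge and not independent. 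Hence $\mu(\emptyset) > \mu(A)$, contradicting uniformity.
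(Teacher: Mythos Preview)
Your proof is correct and follows essentially the same approach as the paper: the distribution $\mu(F)=\Pr[S\cap e=F]$ is exactly the paper's $\card{\calI_F}/i(H)$, the identity $\sum_{u\in e}\Pr[u\notin S]=\E_{F\sim\mu}[\card{\overline F}]$ is the normalized form of the paper's $\sum_{u\in e} i(H\setminus u)=\sum_F (d-\card{F})\card{\calI_F}$, and your equality analysis via the injection $S\mapsto S\setminus(F'\setminus F)$ failing to be surjective is the same as the paper's bijection $b_F$ argument. The only cosmetic difference is that you phrase things probabilistically whereas the paper counts directly.
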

\begin{proof}
  We partition the independent sets in $H$ by their ``footprint'' on $e$:
  \begin{equation*}
    \calI_F \coloneqq \left\{I : I \text{ is an independent set of } H, I \cap e = F\right\}, \text{ where } F \subseteq e.
  \end{equation*}

  Since removing any subset of vertices from an independent set leaves it independent, $\card{\calI_{F}} \ge \card{\calI_{F'}}$ if $F \subseteq F'$. Also, $\calI_e = \varnothing$ since $e$ is an edge.

   For $u \in e$, we can express the number of independent sets in the hypergraph $H \setminus u$ in terms of $\calI_F$.
  \begin{equation*}
    i(H \setminus u) = \sum\limits_{F \subseteq e \setminus \{u\}} \card{\calI_F}.
  \end{equation*}

  Thus, we have the following:
  \begin{equation}\label{eq:sum_H_minus_u}
    \sum\limits_{u \in e} i(H \setminus u) = \sum\limits_{F \subseteq e} (d - \card{F}) \card{\calI_F}.
  \end{equation}

  Consider a distribution $\mu$ defined on the subsets of $e$ as follows:
  \[
    \mu(F) \coloneqq \frac{\card{\calI_F}}{i(H)}.
  \]

  Clearly, $\mu$ satisfies all the conditions of \cref{thm:mu_bound}. Hence,
  \begin{equation}\label{eq:mu_thm_applied_to_indsets}
    \sum\limits_{F \subseteq e} (d-\card{F}) \card{\calI_F} \ge d \frac{2^{d-1}}{2^d-1} i(H).
  \end{equation}

  Applying~\eqref{eq:mu_thm_applied_to_indsets} to~\eqref{eq:sum_H_minus_u} gives
  \begin{equation*}\label{eq:indsets_vertex_removal}
    d \frac{2^{d-1}}{2^d-1} i(H) \le \sum\limits_{u \in e} i(H \setminus u) \le d \max\limits_{u \in e} i(H \setminus u).
  \end{equation*}

  This concludes the proof of the first part of the statement.

  For the second part \cref{thm:mu_bound} also implies that~\eqref{eq:mu_thm_applied_to_indsets} is an equality if and only if $\mu(F) = \frac{1}{2^d-1}$ for every $F \subsetneq e$.
  Consequently,
  \begin{equation}\label{eq:I_F_are_the_same}
    \card{\calI_F} = \card{\calI_\varnothing}.
  \end{equation}

  For every $F \subsetneq e$, consider an injective function $b_F\colon \calI_F \to \calI_\varnothing$ defined as follows:
  \[
    b_F(I) \coloneqq I \setminus F.
  \]

  It follows from~\eqref{eq:I_F_are_the_same} that $b_F$ is a bijection.

  Assume that there exists another edge $e'$ such that $F = e \cap e' \neq \varnothing$. $I = e' \setminus F$ is an independent set (its size is smaller than $d$), and, since $b_F$ is a bijection, $I \cup F = e'$ must be an independent set, which is a contradiction.
\end{proof}

Now we can finally prove \cref{thm:large-r}.
\begin{proof}[Proof of \cref{thm:large-r}]
  We prove it by induction on $n$ and $r$. For the case $r=1$, $H$ must be non-empty. Since removing an edge increases the number of independent sets, we can remove all but one edges from $H$. The hypergraph with exactly one edge has $2^{n - d}(2^d - 1)$ independent sets.

  For the inductive step, we use the bound from \cref{lm:indsets_vertex_removal}.
  Let us denote by $\tau(H)$ the size of a transversal of minimum size in $H$.
  Let $u$ be a vertex of $H$ such that $u$ is contained in at least one edge of $H$. If after removing $u$ the transversal number does not drop, we can remove every edge incident to $u$, and the resulting graph would not have a transversal of size $r-1$, but would have more independent sets than $H$. Thus, without loss of generality, we can assume that for every non-isolated vertex $u$, $\tau(H \setminus u) = \tau(H) - 1$.

  Clearly, $H$ consists of at least one edge. Let $e$ be an edge of $H$. \cref{lm:indsets_vertex_removal} together with the induction hypothesis imply that
  \[
    i(H) \le \frac{2^d-1}{2^{d-1}} i(H \setminus u) \le \frac{2^d-1}{2^{d-1}} 2^{(n-1)-(r-1)d} {(2^d-1)}^{r-1} = 2^{n-rd}{(2^d-1)}^r,
  \]
  and we have an equality here only if $e$ does not intersect any other edge in $H$.
\end{proof}

The next theorem follows immediately from \cref{thm:large-r} and~\eqref{eqn:trans}.

\begin{theorem}\label{thm:large-r-num}
  Let $r \ge (1 - \frac{1}{d})n$. Then $u(n, r, d) = 2^{n - (n - r)d}{(2^d - 1)}^{n - r}$.
\end{theorem}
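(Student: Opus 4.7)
The plan is to chain together the three results established earlier in the section: the equivalence $u(n,r,d) = k(n,r,d)$ from \cref{thm:ud}, the complementation identity~\eqref{eqn:trans} that ties $k$ to $i$, and the explicit count $i(n,r,d) = 2^{n-rd}(2^d-1)^r$ coming from \cref{thm:large-r}. There is no new combinatorics to do; the work is purely bookkeeping with the parameters.

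First I would apply \cref{thm:ud} to rewrite
\[
  u(n, r, d) = k(n, r, d).
\]
Then, using~\eqref{eqn:trans} in the form $k(n, r, d) = i(n, n-r, d)$, I would set $s \coloneqq n - r$ and reduce the problem to evaluating $i(n, s, d)$.

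Next I would verify that the hypothesis of \cref{thm:large-r} is satisfied for this $s$. The assumption $r \ge (1 - \tfrac{1}{d})n$ translates exactly to $s = n - r \le n/d$, which is the range in which \cref{thm:large-r} gives the formula $i(n, s, d) = 2^{n - sd}(2^d - 1)^s$. Substituting back $s = n - r$ then yields
\[
  u(n, r, d) = 2^{n - (n-r)d}(2^d - 1)^{n - r},
\]
which is the claimed identity.

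The only potential pitfall here is checking that the edge cases align: namely that $(1 - 1/d)n \le r \le n$ corresponds to $0 \le s \le n/d$, so that \cref{thm:large-r} is genuinely applicable (including the boundary $s = 0$, where $H$ has no edges and $i(n, 0, d) = 2^n$, consistent with the formula). Since all the heavy lifting — the induction, the entropy-style estimate of \cref{thm:mu_bound}, and the vertex-removal argument of \cref{lm:indsets_vertex_removal} — has already been carried out in establishing \cref{thm:large-r}, there is no remaining obstacle, and the proof is a one-line deduction.
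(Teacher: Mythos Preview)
Your proposal is correct and matches the paper's own argument exactly: the paper simply states that the theorem ``follows immediately from \cref{thm:large-r} and~\eqref{eqn:trans},'' and you have spelled out precisely that chain (with \cref{thm:ud} making the link $u(n,r,d)=k(n,r,d)$ explicit). There is nothing to add or correct.
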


In our applications we only use the upper bound on $u$. We conjecture that the natural extension of \cref{thm:zyk} to $d$-graphs holds. Recall the definition of binomial coefficients to real numbers. Given a positive real $x$ and an integer $k$ with $x \ge k$ we define $\binom{x}{k} \coloneqq \frac{x(x-1)\ldots(x -k + 1)}{k!}$. Furthermore, we define $V(x, d) \coloneqq \binom{x}{0}+\binom{x}{1}+\cdots+\binom{x}{d}$. In particular if $x$ is a positive integer, $V(x, d)$ is the size of the Hamming ball of radius $d$ in the $x$-dimensional cube.

\begin{conjecture}\label{conj:turan}
  Let $H$ be an $n$-vertex $d$-graph with no clique of size $r+1$. Then $k(H) \le {V(\frac{(d - 1)n}{r}, d-1)}^{\frac{r}{d-1}}$. In particular when $d-1 \mid r$ and $r \mid (d - 1)n$, the unique extremal case is the $\frac{r}{d-1}$-partite $d$-graph on $n$ vertices where hyperedges are all $d$-tuples which intersect each part in at most $d - 1$ vertices.
\end{conjecture}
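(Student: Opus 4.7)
The plan is to adapt Zykov's symmetrization proof of \cref{thm:zyk} to the $d$-uniform setting, reducing the problem to a multipartite closure whose clique count can then be optimized by convexity. Throughout, the $d$-graph $H$ is kept $K^d_{r+1}$-free and $k(H)$ is used as a monovariant that does not decrease under the modifications.

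For the symmetrization step I would work with the link $L(v) \coloneqq \{T \in \binom{V \setminus \{v\}}{d-1} : T \cup \{v\} \in E(H)\}$ of each vertex $v$. The intended operation is a cloning move: if two vertices $u, v$ share no hyperedge, replace the edges through $u$ by a copy of those through $v$ while deleting any incidental edges that would place $\{u, v\}$ in a $d$-clique. The two items to verify are (i) that no $K^d_{r+1}$ is created and (ii) that $k(H)$ does not decrease when one clones the vertex of smaller ``clique contribution''; this requires an analog of the graph identity $k(G) = k(G - u) + k(G[N(u)])$ formulated in terms of $L(v)$ and the induced sub-hypergraph on its support. Iterating until no beneficial clone remains should yield a partition $V = V_1 \sqcup \cdots \sqcup V_t$ of the vertex set into classes of pairwise twin, mutually edge-disjoint vertices. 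I would then take the \emph{multipartite closure} of this partition — the $d$-graph whose edges are all $d$-tuples meeting each $V_i$ in at most $d-1$ vertices — and note that doing so creates no $K^d_{r+1}$ and can only increase $k(H)$.

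In the closure, cliques are precisely the subsets meeting each $V_i$ in at most $d-1$ vertices, giving $k(H) = \prod_{i=1}^t V(|V_i|, d-1)$, while the $K^d_{r+1}$-free condition forces $(d-1) t \le r$. For the convexity step I would therefore maximize $\prod_i V(|V_i|, d-1)$ subject to $\sum_i |V_i| = n$ and $t \le r/(d-1)$. A direct calculation should show that $x \mapsto \log V(x, d-1)$ is strictly concave on the relevant range, so Jensen's inequality identifies the balanced partition $|V_i| = n/t$ as optimal for fixed $t$; then checking that $g(t) \coloneqq t \log V(n/t, d-1)$ is monotonically increasing on $[1, r/(d-1)]$ selects $t = r/(d-1)$ as the optimal number of parts, producing the claimed upper bound $V((d-1)n/r, d-1)^{r/(d-1)}$. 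Strict concavity combined with the equality case of Jensen would then deliver the asserted uniqueness under the divisibility assumptions $d-1 \mid r$ and $r \mid (d-1) n$.

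The main obstacle is the symmetrization step, since ``non-adjacency'' in a $d$-uniform hypergraph is no longer a binary relation. Two vertices sharing no edge may still appear together in cliques of size less than $d$, and cloning $u$ to copy the link of $v$ can create new $K^d_{r+1}$'s using arbitrary $(d-2)$-subsets of $V \setminus \{u, v\}$, phenomena that have no graph-theoretic precedent. Making the cloning operation into an honest monovariant therefore seems to require either a finer invariant — for example, iterating on links of links so as to induct on $d$ alongside $n$ and $r$ — or a weighted potential function that reflects the size distribution of cliques. This is where genuinely new ideas beyond a direct transplant of Zykov's argument appear to be needed.
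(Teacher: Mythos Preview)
The statement you are attempting to prove is labelled \cref{conj:turan} in the paper and is explicitly left open: the paper does \emph{not} give a proof of it. The only case established in the paper is $r \ge (1 - 1/d)n$, via \cref{thm:large-r-num}, and that argument proceeds along entirely different lines from yours --- it passes by complementation to counting independent sets in a $d$-graph with no small transversal, and then uses an averaging lemma (\cref{lm:indsets_vertex_removal}) showing that some vertex of an edge can be deleted while losing at most a $\tfrac{2^{d-1}}{2^d-1}$ fraction of the independent sets. No symmetrization appears anywhere.

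Your own diagnosis of the obstacle is correct, and it is not a technicality to be patched but the heart of why the conjecture is open. Zykov cloning relies on the fact that in a graph, two non-adjacent vertices can never lie in a common clique, so replacing one's neighbourhood by the other's is safe; for $d \ge 3$ the pair $\{u,v\}$ is itself a clique and sits inside many cliques of size up to $d-1$, so copying $L(v)$ onto $u$ can manufacture new $K^d_{r+1}$'s through arbitrary $(d-2)$-sets. There is a second, independent gap in your outline that you pass over: even granting a partition into twin classes $V_1,\ldots,V_t$, you assert that the multipartite closure ``creates no $K^d_{r+1}$'', which is equivalent to $t(d-1) \le r$. In the graph case this follows because one vertex from each class forms a clique, forcing $t \le r$; for $d \ge 3$, a transversal of the classes need not be a clique at all, so nothing in your setup bounds $t$. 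Without that bound the convexity step has no constraint to bite on, and the argument collapses before Jensen is ever invoked.
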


Observe that \cref{thm:large-r-num} proves the conjecture for $r \ge (1-\frac{1}{d})n$. Let us make some comments regarding \cref{conj:turan} and how it compares with the usual Tur{\'a}n problem for hypergraphs. The Tur{\'a}n problem asks to determine the maximum number of hyperedges in a $d$-graph with no clique of size $r+1$. This is notoriously open even for $d = r = 3$. One explanation for the intractability of this problem is that unlike the case of graphs, there are exponentially many extremal examples for hypergraphs (see~\cite{MR685045}). In our case however we conjecture that there is a unique extremal example which might mean that the problem is easier. Moreover, for our application we do not need the full generality of the conjecture. In particular, it is sufficient for us to determine the case $d = 3$ and $r = \epsilon n$ for $\epsilon > 0$. Interestingly for some regime of these parameters the Tur{\'a}n number is known and has been rediscovered several times (see~\cite{MR927456,DBLP:journals/combinatorica/ChvatalM92,DBLP:journals/combinatorica/ThomasseY07}).

\begin{theorem}\label{thm:every-r}
Assuming \cref{conj:turan} holds, $u(n, r, d) \le {V(\frac{(d - 1)n}{r}, d-1)}^{\frac{r}{d-1}}$. In particular for every $\epsilon > 0$, $u(n, \epsilon n, 3) \le {(\binom{2/\epsilon}{2}+2/\epsilon+1)}^{\epsilon n/2}$.
\end{theorem}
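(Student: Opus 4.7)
The plan is to reduce \cref{thm:every-r} to the combinatorial content already developed, namely \cref{thm:ud} and \cref{conj:turan}, and then specialize. By \cref{thm:ud} we have $u(n, r, d) = k(n, r, d)$, so a bound on the maximum number of cliques in a $K^d_{r+1}$-free $d$-graph translates directly into a bound on $u(n, r, d)$. Assuming \cref{conj:turan}, every $K^d_{r+1}$-free $d$-graph $H$ on $n$ vertices satisfies $k(H) \le V(\tfrac{(d-1)n}{r}, d-1)^{r/(d-1)}$. Maximizing over such $H$ yields the first inequality claimed in the theorem.

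For the second part I would substitute $d = 3$ and $r = \epsilon n$ into the general bound. Then $(d-1)n/r = 2/\epsilon$ and $r/(d-1) = \epsilon n/2$, and by definition
\[
V(2/\epsilon, 2) \;=\; \binom{2/\epsilon}{0} + \binom{2/\epsilon}{1} + \binom{2/\epsilon}{2} \;=\; 1 + \tfrac{2}{\epsilon} + \binom{2/\epsilon}{2}.
\]
Plugging this in gives $u(n, \epsilon n, 3) \le \bigl(\binom{2/\epsilon}{2} + 2/\epsilon + 1\bigr)^{\epsilon n/2}$, matching the claimed expression.

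Since both ingredients are already in place, there is essentially no obstacle: the proof is a one-line application of \cref{thm:ud} followed by invoking \cref{conj:turan} and then an arithmetic specialization. The only point requiring minor care is verifying that the parameter $x = (d-1)n/r = 2/\epsilon$ used in the generalized binomial coefficient need not be an integer, but this is allowed by the extension of $\binom{x}{k}$ to real $x$ introduced just before \cref{conj:turan}.
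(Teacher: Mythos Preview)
Your proposal is correct and matches the paper's approach: the paper states \cref{thm:every-r} without proof precisely because it is an immediate consequence of \cref{thm:ud} combined with \cref{conj:turan}, followed by the arithmetic specialization you carry out. Your observation that the real-valued binomial coefficient is already defined just before \cref{conj:turan} is the only thing to check, and you handle it correctly.
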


\section{Depth-3 Circuits}

In this section we give applications of the $\un_2$ and $\un_3$ dimension to depth-3 circuits.

\subsection{Projections}

A \emph{projection} in $\Q^n$ is an affine space given by equations of the form $x = 0$, $x = 1$, $x = y$ or $x = 1 - y$. Given $S \subseteq \Q^n$, we denote by $\prj(S)$ the dimension of the largest projection contained in $S$. We define $\af(S)$ to be the dimension of the largest affine space contained in $S$. Note that $\af(S) \ge \prj(S)$ since a projection is a particular type of affine space. We will show that the converse is also true when $S$ is the set of satisfying assignments of a $2$-CNF.

A projection of dimension $d$ in a variable set $X$ can be represented as a sequence of $2(d+1)$ sets $(A_0, B_0, A_1, B_1, \ldots, A_d, B_d)$, where $\bigcup\limits_{i=0}^d A_i \cup B_i = X$ and for every $i \ge 1$ $A_i \cup B_i$ is non-empty. $A_0$ contains variables that are set to $0$, $B_0$ contains those set to $1$, and for $i \ge 1$ the variables from $A_i$ are equal to each other and the variables from $B_i$ are equal to their negations.

For a Boolean function $f$, we write $\prj(f)$ to denote $\prj(f^{-1}(1))$.

\begin{lemma}[Paturi, Saks, and Zane~\cite{DBLP:journals/cc/PaturiSZ00}]\label{lm:psz}
Let $S = \sat(\phi)$ for a $2$-CNF formula $\phi$. Then $\prj(S) = \vc(S)$.
\end{lemma}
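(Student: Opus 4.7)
The inequality $\prj(S) \le \vc(S)$ is immediate: from a projection of dimension $d$ one extracts a shattered set of size $d$ by choosing one representative coordinate from each non-constant equivalence class of the projection. The substance of the lemma is the reverse inequality $\prj(S) \ge \vc(S)$, and my plan is to exploit the special structure of $2$-SAT via its implication graph.

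Let $I \subseteq [n]$ shatter $S$ with $\card{I} = \vc(S) = d$, and consider the implication graph $G_\phi$ whose vertices are the literals $x_1, \bar{x}_1, \ldots, x_n, \bar{x}_n$ and whose edges are $\bar{\ell}_1 \to \ell_2$ and $\bar{\ell}_2 \to \ell_1$ for every clause $\ell_1 \lor \ell_2$ of $\phi$. By standard $2$-SAT theory, $\phi \models \ell \to \ell'$ iff there is a directed path from $\ell$ to $\ell'$ in $G_\phi$, and two literals are equivalent modulo $\phi$ iff they lie in the same strongly connected component; these SCCs come in dual pairs $\{C, \bar{C}\}$ under the involution $\ell \mapsto \bar{\ell}$, and any satisfying assignment of $\phi$ amounts to choosing one side of each pair to be true subject to the DAG of SCCs. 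Shatteredness of $I$ imposes strong constraints: each $x_i$ with $i \in I$ lies in its own SCC-pair, that pair is not forced (otherwise $x_i$ would be a constant of $S$), and for any two distinct $I$-literals there is no nontrivial directed path between them in $G_\phi$, since such a path would force a relation between the two $I$-variables and contradict the fact that all four value combinations appear in $S$. In particular, no non-$I$ SCC-pair can be simultaneously reachable from some $I$-literal and able to reach some $I$-literal, since composing these paths would yield a forbidden $I$-to-$I$ path.

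With this structure in hand, I would build a projection $P$ of dimension $d$ as follows: put each $x_i$ for $i \in I$ in its own equivalence class (the free variables of $P$); for each $j \notin I$, if the SCC-pair of $x_j$ equals that of some $x_i$ with $i \in I$, link $x_j$ to $x_i$ with the appropriate sign ($x_j = x_i$ or $x_j = 1 - x_i$ according to which sides are identified); otherwise, fix $x_j$ to the constant whose side of its SCC-pair is reachable from $I$-literals in $G_\phi$, and if its pair is disconnected from $I$ altogether, use an arbitrary fixed satisfying assignment of that connected component of $G_\phi$ (such an assignment exists since $\phi$ is satisfiable). This produces a projection with exactly $\card{I} = d$ free parameters. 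The main obstacle is verifying $P \subseteq \sat(\phi)$, i.e., that every $\tau$-extension, for $\tau \in \Q^I$, satisfies every clause of $\phi$; this reduces to a case analysis tracking where the SCCs of the two literals of a clause sit in the DAG relative to the $I$-literals, and the key structural input making it go through is the shatteredness-based exclusion of the ``both reachable from $I$ and reaching $I$'' situation described above.
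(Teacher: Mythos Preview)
Your proposal is correct and follows the same implication-digraph approach as Paturi--Saks--Zane; the paper does not prove this lemma itself (it is cited), but it reproduces exactly this construction in the proof of \cref{lm:vdu2}, classifying non-$I$ literals as implying an $I$-literal (set to $0$), implied by an $I$-literal (set to $1$), in the same SCC as an $I$-literal (set accordingly), or unrelated (set by a fixed satisfying assignment $\alpha$). Your explicit observation that only the SCC-linked coordinates vary with the $I$-assignment, so that the resulting family of extensions is a projection, is precisely what yields $\prj(S) \ge \vc(S)$.
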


Thus, by the Sauer--Shelah lemma, if $\prj(S) \le d$ for such $S$, then $\card{S} \le \sum_{i = 0}^d \binom{n}{i}$. We improve this bound by
showing that as far as 2-CNFs are concerned, $\vc$ and $\un_2$ dimensions are the same.

\begin{lemma}\label{lm:vdu2}
Let $S = \sat(\phi)$ for a $2$-CNF formula $\phi$. Then $\vc(S) = \un_2(S)$.
\end{lemma}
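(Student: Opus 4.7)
The inequality $\vc(S) \le \un_2(S)$ is immediate: if $I$ is shattered then any $2$-subset $J \subseteq I$ is also shattered, so $I$ is $2$-universal. The substance lies in the reverse direction $\un_2(S) \le \vc(S)$. Let $I \subseteq [n]$ be a $2$-universal set for $S = \sat(\phi)$, where $\phi$ is a $2$-CNF; I will show by induction on $\card{I}$ that $I$ is shattered, whence $\vc(S) \ge \card{I}$.

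The base case $\card{I} = 2$ is the definition of $2$-universality. For $\card{I} \ge 3$, observe that $I \setminus \{i\}$ remains $2$-universal for each $i \in I$ (its $2$-subsets are $2$-subsets of $I$), and hence shattered by the induction hypothesis. Fix a target $\beta \in \Q^{I}$ that I wish to realize by an element of $S$, and pick three distinct elements $i_1, i_2, i_3 \in I$. For each $j \in \{1,2,3\}$, applying the induction hypothesis to $I \setminus \{i_j\}$ provides an assignment $\gamma^{(j)} \in S$ with $\gamma^{(j)}|_{I \setminus \{i_j\}} = \beta|_{I \setminus \{i_j\}}$.

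The goal is now to combine these three extensions into a single point of $S$ that agrees with $\beta$ on all of $I$. For this I will invoke the standard fact that $\sat(\phi)$ is closed under the coordinate-wise ternary majority $\operatorname{med}$ for any $2$-CNF $\phi$: if $\operatorname{med}(\alpha, \alpha', \alpha'')$ falsified a clause $(\ell_1 \vee \ell_2)$, then each literal $\ell_k$ would be false under at least two of $\alpha, \alpha', \alpha''$, and by pigeonhole one of the three assignments would falsify both literals, a contradiction. Set $\delta \coloneqq \operatorname{med}(\gamma^{(1)}, \gamma^{(2)}, \gamma^{(3)}) \in S$. For $i \in I \setminus \{i_1, i_2, i_3\}$ all three values $\gamma^{(j)}(i)$ equal $\beta(i)$, so $\delta(i) = \beta(i)$; for $i = i_j$ the two extensions $\gamma^{(j')}$ with $j' \ne j$ both satisfy $\gamma^{(j')}(i) = \beta(i)$, so again $\delta(i) = \beta(i)$. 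Hence $\delta|_I = \beta$, giving $\beta \in \Tr_S(I)$; as $\beta$ was arbitrary, $I$ is shattered. The only nontrivial ingredient beyond this bookkeeping is the median-closure property of $2$-CNFs, which is classical (Schaefer-style), so this is where the argument genuinely uses that $\phi$ is a $2$-CNF rather than an arbitrary CNF.
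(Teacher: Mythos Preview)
Your proof is correct and takes a genuinely different route from the paper's.

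The paper argues via the implication digraph of the $2$-CNF: from $2$-universality of $I$ it deduces that no literal on a variable of $I$ implies a literal on another variable of $I$, and then, following Paturi--Saks--Zane, it explicitly extends any partial assignment on $I$ to a full satisfying assignment by propagating along implications and filling in the rest from a fixed satisfying assignment. Your argument instead runs an induction on $\card{I}$ and uses only the coordinatewise-median closure of $\sat(\phi)$ for $2$-CNFs: three inductively obtained witnesses, one for each $I\setminus\{i_j\}$, are combined by majority to hit the full target $\beta$ on $I$. The pigeonhole justification you give for median closure is exactly right, and the bookkeeping on the coordinates $i_1,i_2,i_3$ is correct.

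What the two approaches buy: the paper's argument is constructive and plugs directly into the PSZ framework already used for \cref{lm:psz}; yours is shorter, avoids the digraph machinery entirely, and in fact shows the stronger statement that $\vc(S)=\un_2(S)$ for \emph{any} median-closed $S\subseteq\Q^n$, not just satisfying sets of $2$-CNFs. That generalization is a nice byproduct, since median-closed sets are precisely those definable by (generalized) $2$-CNFs in Schaefer's sense.
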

\begin{proof}
$\vc(S) \le \un_2(S)$ follows from the definition.
It remains to show that $\vc(S) \ge \un_2(S)$. Let $\phi$ be the $2$-CNF formula in a variable set $X$ with $S = \sat(\phi)$. Recall the \textit{implication digraph} of $\phi$, $D(F)$, which is constructed as follows. For every literal $u$ there is a vertex. For every clause $u \vee v$ we have two edges, $\overline{u} \rightarrow v$ and $\overline{v} \rightarrow u$. Every unit clause $v$ gives the edge $\overline{v} \rightarrow v$. We say that literal $u$ \emph{implies} literal $v$ if there is a directed path from $u$ to $v$.
Let $I \subseteq X$ be $2$-universal for $S$. It follows that for any $x, y \in I$ no literal on $x$ implies a literal on $y$, since otherwise setting a value of one forces the value of the other, contradicting $2$-universality. We claim that any assignment to the variables in $I$ can be extended to a full assignment satisfying $\phi$. Let $\alpha$ be any satisfying assignment of $\phi$. We follow the argument of~\cite{DBLP:journals/cc/PaturiSZ00}. There are different types of literals:

\begin{itemize}
\item Literals that imply some literal in $I$: we set such literals to $0$.
\item Those that are implied by some literal in $I$: we set these to $1$.
\item Those that are in the same strongly connected component with some literal in $I$: we set these as the one in $I$.
\item All other literals: we set these according to $\alpha$.
\end{itemize}

It is easy to see that it defines a satisfying assignment.

\end{proof}

We need the following lemma.

\begin{lemma}
Let $S \subseteq \Q^n$ be a $d$-dimensional affine space. Then $\vc(S) = d$.
\end{lemma}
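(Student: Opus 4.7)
The plan is to write the $d$-dimensional affine space as $S = w + V$ for some $w \in \Q^n$ and some $d$-dimensional $\bbF_2$-linear subspace $V \subseteq \bbF_2^n$, and then argue each inequality separately using elementary linear algebra over $\bbF_2$.

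For the upper bound $\vc(S) \le d$, I would simply note that $\card{S} = 2^d$, and that if $I \subseteq [n]$ is shattered then the projection of $S$ onto the coordinates in $I$ has size $2^{\card{I}}$. Since projections cannot increase the cardinality, $2^{\card{I}} \le \card{S} = 2^d$, so $\card{I} \le d$.

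For the lower bound $\vc(S) \ge d$, the task is to exhibit a shattered set of size $d$. Pick a basis $v_1, \ldots, v_d$ of $V$ and assemble them as the columns of an $n \times d$ matrix $A$ over $\bbF_2$. Since the columns of $A$ are linearly independent, $A$ has rank $d$, so $A$ has $d$ linearly independent rows; let $I \subseteq [n]$ be the set of indices of such rows, so $\card{I}=d$. For the coordinate projection $\pi_I \colon \bbF_2^n \to \bbF_2^I$, the restriction $\pi_I|_V$ is represented by the $d \times d$ submatrix $A_I$, which has rank $d$ and is therefore a bijection onto $\bbF_2^I$. Consequently $\pi_I(S) = \pi_I(w) + \pi_I(V) = \pi_I(w) + \bbF_2^I = \bbF_2^I$, which shows that $I$ is shattered.

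Combining the two inequalities yields $\vc(S) = d$. There is no real obstacle here; the only thing worth flagging is the mild point that the identification of $\Q^n$ with $\bbF_2^n$ is what makes the notion of affine space meaningful, and the content of the argument lies in the standard fact that a rank-$d$ matrix has $d$ linearly independent rows, which provides exactly the coordinates on which $V$ (and hence $S$) projects surjectively.
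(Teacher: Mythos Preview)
Your proof is correct and follows essentially the same approach as the paper. Both arguments parametrize $S$ by a full-rank matrix (the paper uses a $d\times n$ matrix $M$ with $S=\{xM+c:x\in\Q^d\}$, you use its transpose $A$ with basis vectors as columns), then select $d$ linearly independent columns (respectively rows) to obtain an invertible $d\times d$ submatrix whose corresponding coordinate set is shattered; the paper leaves the trivial upper bound $\vc(S)\le d$ implicit, whereas you spell it out.
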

\begin{proof}

Since $S$ is $d$-dimensional, there exists a full rank matrix $M \in \Q^{d\times n}$ and $c \in \Q^n$, such that $S = \{xM + c : x \in \Q^d\}$. Since $M$ is full rank, there exists a set $L \subseteq [n]$ of $d$ linearly independent columns. Let $M_L$ be the restriction of $M$ to the columns in $L$. It follows that $M_L$ is a full-rank $d\times d$ matrix. Observe that the projection of $S$ on $L$ is given by $\{xM_L + c : x \in \Q^d\}$. Since $M_L$ is full-rank, this equals to the whole $d$-dimensional space. Therefore, $L$ is shattered by $S$, and we are done.

\end{proof}

Combining these lemmas we get the following.

\begin{theorem}\label{lm:2sat_af_eq_u2}
  Let $S = \sat(\phi)$ for a $2$-CNF formula $\phi$. Then
  \[
    \af(S) = \vc(S) = \prj(S) = \un_2(S).
  \]
\end{theorem}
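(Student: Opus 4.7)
The plan is to assemble the four quantities into a cycle of inequalities using the three lemmas just established and a general monotonicity argument.

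First, I would note the two ``free'' inequalities that hold for any $S \subseteq \Q^n$, independently of whether $S$ comes from a $2$-CNF. Namely, $\vc(S) \le \un_2(S)$ holds directly from the definitions (a shattered set of size at most $2$ automatically witnesses $2$-universality on itself), and $\prj(S) \le \af(S)$ holds because every projection is a special kind of affine subspace.

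Next, I would establish the key general inequality $\af(S) \le \vc(S)$. For this, let $T \subseteq S$ be an affine subspace of dimension $d = \af(S)$. By the preceding lemma, $\vc(T) = d$. Since $\vc$ is monotone under inclusion (any coordinate set shattered by $T$ is automatically shattered by the superset $S$), we conclude $\vc(S) \ge \vc(T) = d = \af(S)$. This step is the main conceptual ingredient, but it is essentially immediate once the preceding lemma is in hand; no real obstacle here.

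Finally, I would invoke the two $2$-CNF-specific results to close the cycle. Lemma \ref{lm:psz} gives $\prj(S) = \vc(S)$, and Lemma \ref{lm:vdu2} gives $\vc(S) = \un_2(S)$. Chaining these with the inequalities above yields
\[
  \prj(S) \le \af(S) \le \vc(S) = \prj(S),
\]
forcing equality throughout, and $\un_2(S) = \vc(S)$ completes the four-way equality. The proof is essentially bookkeeping on top of the three preceding lemmas; I do not anticipate any serious obstacle.
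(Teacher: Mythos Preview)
Your proposal is correct and matches the paper's approach exactly: the paper's proof is the single sentence ``Combining these lemmas we get the following,'' and what you have written is precisely the natural way to carry out that combination, using the monotonicity of $\vc$ under inclusion together with the preceding lemma to obtain $\af(S) \le \vc(S)$, and then closing the cycle with Lemmas~\ref{lm:psz} and~\ref{lm:vdu2}. One cosmetic point: your parenthetical justification for $\vc(S)\le\un_2(S)$ is garbled (you want that any shattered set of size $\ge 2$ is automatically $2$-universal, not ``of size at most $2$''), but the inequality itself is the one the paper states right after the definition.
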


The following lemma directly follows from \cref{thm:utwo} and \cref{lm:vdu2}.

\begin{lemma}\label{lm:2sat_u2_size_bound}
  Let $S = \sat(\phi)$ for a $2$-CNF formula $\phi$ in $n$ variables. Then 
  \[\card{S} \le {\left(\frac{n}{\prj(S)} + 1\right)}^{\prj(S)}.\]
\end{lemma}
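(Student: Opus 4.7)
The plan is simply to chain together the three results that already appear in the paper. First, by Lemma~\ref{lm:psz} of Paturi, Saks, and Zane, the largest projection contained in $S=\sat(\phi)$ has dimension equal to the VC-dimension of $S$: $\prj(S) = \vc(S)$. Second, by \cref{lm:vdu2}, the VC-dimension and the $\un_2$-dimension coincide for the satisfying-assignment sets of 2-CNFs: $\vc(S) = \un_2(S)$. Concatenating these two equalities gives $\prj(S) = \un_2(S)$.

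Having identified $\prj(S)$ with $\un_2(S)$, I would then apply \cref{thm:utwo} to $S$, viewed as a family of subsets of $[n]$. That theorem asserts $\card{S} \le \bigl(\tfrac{n}{\un_2(S)} + 1\bigr)^{\un_2(S)}$. Substituting $\un_2(S) = \prj(S)$ in both the base and exponent yields the claimed bound $\card{S} \le \bigl(\tfrac{n}{\prj(S)} + 1\bigr)^{\prj(S)}$.

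There is no substantive obstacle: all the work is concentrated in the already-proved ingredients, namely the implication-digraph argument behind \cref{lm:vdu2} and the reduction to Zykov's clique-counting theorem behind \cref{thm:utwo}. The only thing one might worry about is whether one is comparing the right quantities --- but because $\prj(S)$, $\vc(S)$ and $\un_2(S)$ are literally equal (not merely ordered), no monotonicity of $x \mapsto (n/x + 1)^x$ is needed, and the substitution is immediate. A minor footnote would be the degenerate case $\prj(S) = 0$, which corresponds to $\card{S}\le 1$ and is handled by the usual convention in \cref{thm:utwo}.
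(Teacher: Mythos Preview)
Your proposal is correct and is exactly the argument the paper has in mind: the paper simply states that the lemma ``directly follows from \cref{thm:utwo} and \cref{lm:vdu2}'' (together with \cref{lm:psz}, which you rightly include to pass from $\un_2(S)$ to $\prj(S)$), and you have spelled out precisely that chain of equalities and substitution.
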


Although projections can be used to prove lower bounds on $\Sigma_3^3$ circuits, their application is quite limited. As noted in~\cite{DBLP:journals/cc/PaturiSZ00}, low-density parity-check codes suggested by Gallager~\cite{DBLP:journals/tit/Gallager62} can be used to give an example of such a limitation. Let $H$ be a parity-check matrix of such a code. It contains at most $4$ ones in each row. Therefore, the system $Hx=0$ can be represented as a $4$-CNF formula. This code has exponentially many codewords and only contains projections of constant size since its distance is linear.

This construction can easily be extended to $3$-CNF formulas. Consider any line of the linear system $Hx=0$ containing exactly four variables. Without loss of generality, we may assume that it depends on variables $x_1, x_2, x_3, x_4$, i.e., it is $x_1+x_2+x_3+x_4=0$. We replace it with two new lines $y+x_3+x_4=0$ and $y = x_1+x_2$, where $y$ is a fresh extension variable. We do this replacement for every line of $Hx=0$ with four variables. After this transformation, the new system can be represented as a $3$-CNF formula. Since the new extension variables are uniquely determined by the original variables, the new code has the same number of codewords and its distance is at least the distance of the original code. Hence, it can only contain a projection of at most constant size.

However, we show that a $3$-CNF with sufficiently many satisfying assignments accepts a projection of linear dimension.

\begin{lemma}\label{lm:hittingset}
Let $\phi$ be a $k$-CNF formula in a variable set $X$ and let $S = \sat(\phi)$. Assume that $I \subseteq X$ is $k$-universal for $S$. Then $X \setminus I$ is a hitting set for $\phi$, i.e., every clause $C \in \phi$ intersects $X \setminus I$.
\end{lemma}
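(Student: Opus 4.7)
The plan is to argue by contradiction. Suppose some clause $C \in \phi$ lies entirely inside $I$, i.e., every variable appearing in $C$ belongs to $I$. Let $J \subseteq I$ be the set of variables occurring in $C$, so that $|J| \le k$. Since $I$ is $k$-universal, we have $|I| \ge k$, and we can extend $J$ to some $J' \subseteq I$ with $|J'| = k$ (if $|J| < k$, pad arbitrarily with other elements of $I$; if $|J| = k$, take $J' = J$).

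The key step is to use $k$-universality on $J'$. By definition this gives $\card{\Tr_S(J')} = 2^k$, i.e., the projection of $S$ onto the coordinates in $J'$ is all of $\Q^{J'}$. In particular, there is some assignment $s \in S$ whose restriction to $J$ equals the unique assignment on the variables of $C$ that falsifies $C$ (the assignment that sets every literal of $C$ to $0$). But $s \in \sat(\phi)$ must satisfy every clause of $\phi$, including $C$; this is the desired contradiction. Hence no clause is contained in $I$, so every clause intersects $X \setminus I$, meaning $X \setminus I$ is a hitting set.

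I do not expect any serious obstacle here. The only points to be careful about are (i) that the clause $C$ need not have exactly $k$ variables, which is why we pad $J$ up to size $k$ inside $I$ before invoking the shattering property, and (ii) that "shattered" really gives us every possible pattern on $J'$, so in particular the single falsifying pattern of $C$ on $J \subseteq J'$ is realised by some $s \in S$. Both points are routine once the setup is in place.
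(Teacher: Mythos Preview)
Your proof is correct and follows essentially the same contradiction argument as the paper's: assume a clause lies entirely inside $I$, use $k$-universality to realise the falsifying pattern, and contradict $S = \sat(\phi)$. If anything, you are slightly more careful than the paper in handling clauses of width strictly less than $k$ (by padding $J$ to $J'$) and in allowing negated literals, whereas the paper tacitly normalises to $C = x_1 \vee \cdots \vee x_k$.
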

\begin{proof}
Assume that this is not the case and a clause $C$ is entirely contained in $I$. Assume without loss of generality that $C = x_1 \vee \ldots \vee x_k$. By $k$-universality of $I$, $S$ contains an assignment which sets all these variables to 0. However, this assignment falsifies $C$ and hence cannot be in $S$, which is a contradiction.
\end{proof}

\begin{lemma}\label{lm:3cnf-proj}
There exists a universal constant $\delta > 0$ such that if $\phi$ is a $3$-CNF in $n$ variables accepting at least $7^{n/3} \simeq 2^{0.936 n}$ assignments, then $\prj(\phi) \ge \delta n$. Assuming \cref{conj:turan} holds, the statement holds (although for a different $\delta > 0$) when $\phi$ has at least $2^{0.707 n}$ satisfying assignments.
\end{lemma}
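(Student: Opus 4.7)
The approach is a two-step reduction: first use the $\un_3$ bound to find a large $3$-universal subset $I \subseteq X$ for $\sat(\phi)$, then restrict the variables outside $I$ to reduce to the $2$-CNF case already handled by \cref{lm:2sat_u2_size_bound}. For the first step, I would apply \cref{thm:large-r-num} with $d = 3$ and $r = \lceil 2n/3 \rceil$. Since $u(n, 2n/3, 3) = 2^{n - n}(2^3-1)^{n/3} = 7^{n/3}$, the hypothesis $|\sat(\phi)| \ge 7^{n/3}$ gives (up to a harmless rounding/boundary correction) a subset $I$ that is $3$-universal for $\sat(\phi)$ with $|I| \ge 2n/3 + 1$, so $Y \coloneqq X \setminus I$ has $|Y| \le n/3 - 1$. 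By \cref{lm:hittingset}, $Y$ is a hitting set for $\phi$.

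For the second step, by averaging some restriction $\alpha \in \Q^Y$ satisfies $|\sat(\phi|_\alpha)| \ge |\sat(\phi)|/2^{|Y|} \ge (7/2)^{n/3}$. The key structural observation is that $\phi|_\alpha$ is a $2$-CNF on the variables $I$: every clause of $\phi$ contains at least one literal on $Y$, so under $\alpha$ it is either satisfied (and removed) or loses at least one literal, leaving a clause of length at most $2$ on $I$; empty clauses cannot arise because $\phi|_\alpha$ is satisfiable by the choice of $\alpha$. Applying \cref{lm:2sat_u2_size_bound} to $\phi|_\alpha$ and writing $p \coloneqq \prj(\phi|_\alpha)$ yields $(|I|/p + 1)^p \ge (7/2)^{n/3}$. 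Using $|I| \le n$ and substituting $p = \delta n$, this reduces to the inequality $\delta \log_2(1/\delta + 1) \ge (1/3)\log_2(7/2)$, which holds for some absolute constant $\delta > 0$. Thus $\prj(\phi|_\alpha) \ge \delta n$, and combining such a projection on the coordinates in $I$ with the fixed values of $\alpha$ on $Y$ yields a projection of the same dimension in $\sat(\phi)$.

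The second part of the statement is proved by the same template, replacing \cref{thm:large-r-num} with \cref{thm:every-r}. One selects $c$ near $1 - 0.707 = 0.293$ so that both (i) $(c/2)\log_2(\binom{2/c}{2} + 2/c + 1) < 0.707$, which triggers the $\un_3$ bound and forces $|I| \ge cn + 1$, and (ii) the averaging exponent $0.707 + c - 1$ is positive, so that $|\sat(\phi|_\alpha)| \ge 2^{(0.707 + c - 1)n}$ is exponential and the $2$-CNF projection bound again produces a linear $\prj$. The main obstacle is balancing (i) and (ii): they constrain $c$ to a narrow window near $0.293$, so the resulting $\delta$ may be very small, but a direct computation shows that this window is nonempty at the stated threshold.
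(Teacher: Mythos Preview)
Your proposal is correct and follows essentially the same approach as the paper's proof: use the $\un_3$ bound (\cref{thm:large-r-num}, respectively \cref{thm:every-r} under the conjecture) to find a large $3$-universal set whose complement is a hitting set by \cref{lm:hittingset}, restrict that complement by averaging, and then apply \cref{lm:2sat_u2_size_bound} to the resulting $2$-CNF. Your write-up is in fact slightly more detailed than the paper's (you spell out why the restricted formula is a $2$-CNF and why the projection lifts back to $\sat(\phi)$), and the off-by-one in $|I| \ge 2n/3 + 1$ versus the paper's $\un_3(S) \ge 2n/3$ is indeed harmless for the asymptotic conclusion.
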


\begin{proof}
Let $S = \sat(\phi)$. Since $\card{S} \ge 7^{n/3}$, \cref{thm:large-r-num} implies that $\un_3(S) \ge 2n/3$. By \cref{lm:hittingset}, $\phi$ has a hitting set $J$ of size at most $n/3$. Under any restriction $\sigma$ of $J$, $\restr{\phi}{\sigma}$ is a $2$-CNF. Choose $\sigma$ such that $\restr{\phi}{\sigma}$ has at least $(7/2)^{n/3}$ satisfying assignment. \cref{lm:2sat_u2_size_bound} implies there exists $\delta > 0$ such that $\prj(\phi) \ge \prj(\restr{\phi}{\sigma}) \ge \delta n$.

Under \cref{conj:turan}, $u(n, 0.296 n, 3) \le 2^{0.706n}$. Thus, if $\phi$ has at least $2^{0.707 n}$ satisfying assignments, then $\phi$ has a hitting set of size at most $0.704 n$. Now we can apply the same argument as above.

\end{proof}

\subsection{Affine dispersers}

Recall that an affine disperser for dimension $d$ is a function, which is not constant under any affine space of dimension $d$.




\begin{theorem}\label{thm:sigma32_prj}
  Let $f : \Q^n \rightarrow \Q$ be an affine disperser for dimension $d+1$. Then
  \[
    s_3^2(f) \ge \frac{\card{f^{-1}(1)}}{{\left(\frac{n}{d} + 1\right)}^{d}}.
  \]
\end{theorem}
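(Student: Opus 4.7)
The plan is to invoke the standard covering argument for $\Sigma_3^k$ lower bounds (outlined in the "Lower bound arguments" paragraph of the introduction) and then bound the satisfying assignments of each consistent 2-CNF using the machinery already developed. Concretely, if $C$ is a $\Sigma_3^2$-circuit computing $f$ of size $s$, then writing $C$ as $\bigvee_{i=1}^{s'} \phi_i$ with each $\phi_i$ a 2-CNF (where $s' \le s$), we have $\phi_i(x) \le f(x)$ for every $i$ and every $x$, and the union of the sets $\sat(\phi_i)$ covers $f^{-1}(1)$. Thus it suffices to show that every 2-CNF $\phi$ with $\sat(\phi) \subseteq f^{-1}(1)$ satisfies $\card{\sat(\phi)} \le (n/d + 1)^d$; the claim then follows by averaging.

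To bound such a $\phi$, I would first observe that the affine disperser assumption forces $\af(\sat(\phi)) \le d$: if $A \subseteq \sat(\phi) \subseteq f^{-1}(1)$ were an affine subspace of dimension $d+1$, then $f$ would be identically $1$ on $A$, contradicting the non-constancy of $f$ on every affine subspace of dimension $d+1$. Next, I apply \cref{lm:2sat_af_eq_u2}, which identifies $\af(S) = \un_2(S)$ whenever $S$ is the satisfying set of a 2-CNF, to conclude $\un_2(\sat(\phi)) \le d$. Finally, \cref{thm:utwo} gives the clique-count bound $\card{\sat(\phi)} \le u(n, d, 2) \le (n/d + 1)^d$, which is exactly what we need.

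Putting the pieces together,
\[
  \card{f^{-1}(1)} \;\le\; \sum_{i=1}^{s'} \card{\sat(\phi_i)} \;\le\; s' \cdot \Bigl(\tfrac{n}{d} + 1\Bigr)^{d} \;\le\; s_3^2(f) \cdot \Bigl(\tfrac{n}{d} + 1\Bigr)^{d},
\]
and rearranging yields the theorem.

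There is no real obstacle here: every ingredient (the covering viewpoint of $\Sigma_3^k$-circuits, the equality $\af = \un_2$ for 2-CNFs, and the Zykov-type bound $u(n,d,2) \le (n/d+1)^d$) has already been established in the excerpt. The only subtlety worth stating explicitly is the translation from "affine disperser for dimension $d+1$" to "$f^{-1}(1)$ contains no affine subspace of dimension $d+1$," which uses the standard observation that non-constancy on $A$ rules out $A \subseteq f^{-1}(1)$.
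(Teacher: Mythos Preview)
Your proposal is correct and follows essentially the same route as the paper: write the circuit as an OR of 2-CNFs, use the affine-disperser hypothesis to bound $\af(\sat(\phi_i))\le d$, invoke \cref{lm:2sat_af_eq_u2} to translate this into a bound on $\un_2$ (the paper phrases it via $\prj$ and \cref{lm:2sat_u2_size_bound}, which is the same content), and then apply the Zykov-type bound from \cref{thm:utwo} together with the covering argument. The only cosmetic difference is that you cite \cref{thm:utwo} directly while the paper routes through \cref{lm:2sat_u2_size_bound}.
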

\begin{proof}
  Suppose that $f = \bigvee\limits_{i=1}^m \phi_i$, where $\phi_i$ are $2$-CNF formulas. It is clear that $\af(\phi_i) \le \af(f) \le d$.
  Let $S_i = \sat(\phi_i)$. Since $\af(\phi_i) = \prj(\phi_i)$ by \cref{lm:2sat_af_eq_u2},
  \cref{lm:2sat_u2_size_bound} implies that
  \[
    \card{S_i} \le {\left(\frac{n}{d} + 1\right)}^{d}.
  \]

  Hence,
  \[
    s_3^2(f) \ge m \ge \frac{\card{f^{-1}(1)}}{{\left(\frac{n}{d} + 1\right)}^{d}}.
  \]
\end{proof}

\begin{theorem}
  Let $f = \Q^n \rightarrow \Q$ be an affine disperser for dimension $d = o(n)$ with $\card{f^{-1}(1)} \ge 2^{n - o(n)}$. Then
  \[
    s_3^3(f) \ge 2^{0.064n - o(n)}.
  \]

  Furthermore, assuming \cref{conj:turan} holds,
  \[
    s_3^3(f) \ge 2^{0.293n - o(n)}.
  \]
\end{theorem}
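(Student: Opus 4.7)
The plan is to bound the size of a $\Sigma_3^3$-circuit for $f$ by the standard covering argument used earlier in the paper: write $f = \bigvee_{i=1}^m \phi_i$ where each $\phi_i$ is a $3$-CNF consistent with $f$ (that is, $\sat(\phi_i) \subseteq f^{-1}(1)$), and argue that no such $\phi_i$ can have too many satisfying assignments, so that a large number $m$ of them is required to cover the large set $f^{-1}(1)$.

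The key observation is that the affine disperser hypothesis uniformly bounds $\prj(\phi_i)$ across all such $\phi_i$. Since $f$ is an affine disperser for dimension $d = o(n)$, the set $f^{-1}(1)$ contains no affine subspace of dimension $\geq d$. Consequently $\af(\sat(\phi_i)) < d$, and since every projection is a particular type of affine subspace, $\prj(\phi_i) \le \af(\sat(\phi_i)) < d = o(n)$. I then invoke the contrapositive of \cref{lm:3cnf-proj}: for the universal constant $\delta > 0$ guaranteed there, the inequality $\prj(\phi_i) \ge \delta n$ must hold whenever $|\sat(\phi_i)| \ge 7^{n/3}$. Since $\prj(\phi_i) = o(n) < \delta n$ for all sufficiently large $n$, we conclude $|\sat(\phi_i)| < 7^{n/3}$.

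Assembling these pieces yields
\[
m \ge \frac{|f^{-1}(1)|}{\max_i |\sat(\phi_i)|} \ge \frac{2^{n - o(n)}}{7^{n/3}} = 2^{\left(1 - \frac{\log_2 7}{3}\right)n - o(n)} \ge 2^{0.064 n - o(n)},
\]
which is the unconditional bound. The conditional strengthening comes from the second half of \cref{lm:3cnf-proj}: assuming \cref{conj:turan}, the threshold $7^{n/3}$ is replaced by $2^{0.707n}$, and the identical covering computation then gives $m \ge 2^{0.293 n - o(n)}$. I do not anticipate any real obstacle; the argument is a one-step packaging of \cref{lm:3cnf-proj} with the definition of an affine disperser. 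The only mild subtlety is the passage from $\af$ to $\prj$, which is free since projections are affine spaces, and checking that the $o(n)$ slack in the disperser dimension is absorbed into the $o(n)$ loss in the final exponent.
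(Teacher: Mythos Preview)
Your proposal is correct and is essentially the paper's own argument: the paper's proof is the single line ``We apply \cref{lm:3cnf-proj} and follow the same proof as \cref{thm:sigma32_prj},'' and you have unpacked exactly that, using the disperser hypothesis to force $\prj(\phi_i)=o(n)$, applying the contrapositive of \cref{lm:3cnf-proj} to cap $|\sat(\phi_i)|$ at $7^{n/3}$ (respectively $2^{0.707n}$), and finishing with the covering bound. Nothing is missing.
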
{}

\begin{proof}

We apply \cref{lm:3cnf-proj} and follow the same proof as \cref{thm:sigma32_prj}.

\end{proof}

%



\subsection{Degree-\texorpdfstring{$2$}{2} polynomials}

We now give a lower bound for all degree-2 polynomials over $\bbF_2$.

\begin{lemma}\label{lm:deg2_poly_ind_set}
  Let $p$ be a degree-$2$ polynomial over $\bbF_2$ in $n$ variables and $I$ a set of variables of $p$, no two of which produce a monomial of $p$. Then
  \[
    s_3^t(p) \ge 2^{\frac{\card{I}}{2t}}.
  \]
\end{lemma}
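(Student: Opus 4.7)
The plan is to apply a restriction that reduces $p$ to the parity function on roughly half of the variables in $I$, and then invoke the Paturi--Pudl{\'a}k--Zane lower bound for parity.

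The key structural observation is that, because no two variables in $I$ jointly produce a monomial of $p$, for every partial assignment $\rho$ to the variables outside $I$ the restricted polynomial $p|_\rho$ is affine in the variables of $I$. Writing $p = \sum_i c_i x_i + \sum_{i<j} a_{ij} x_i x_j + d$ over $\bbF_2$, the coefficient of $x_i$ (for $i \in I$) in $p|_\rho$ equals $c_i(\rho) := c_i + \sum_{j \notin I} a_{ij} \rho_j$, since $a_{ij} = 0$ whenever both $i, j \in I$.

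Next I would argue by averaging over a uniformly random $\rho \in \Q^{[n] \setminus I}$. For each $i \in I$ that appears nontrivially in $p$, either $x_i$ participates in some quadratic monomial $x_i x_j$ with $j \notin I$, in which case $c_i(\rho)$ is a non-constant affine function of $\rho$ and $\Pr[c_i(\rho) = 1] = 1/2$; or $x_i$ appears only as a linear term with coefficient $1$, in which case $c_i(\rho) \equiv 1$. In either case $\Pr[c_i(\rho) = 1] \ge 1/2$, so the expected size of $I' := \{i \in I : c_i(\rho) = 1\}$ is at least $|I|/2$, and some $\rho$ achieves $\card{I'} \ge \card{I}/2$. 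Fix such a $\rho$ and extend it to a full restriction $\sigma$ of $[n] \setminus I'$ by assigning the variables in $I \setminus I'$ arbitrarily; then $p|_\sigma(x_{I'}) = \sum_{i \in I'} x_i + d'$ for some constant $d' \in \Q$, i.e., the parity function on $\card{I'}$ variables (possibly negated).

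Finally, since restrictions do not increase $\Sigma_3^t$-size, we have $s_3^t(p) \ge s_3^t(p|_\sigma)$, and the Paturi--Pudl{\'a}k--Zane lower bound $s_3^t(\mathrm{PARITY}_k) \ge 2^{k/t}$ yields $s_3^t(p) \ge 2^{\card{I'}/t} \ge 2^{\card{I}/(2t)}$. The main technical subtlety is invoking the PPZ lower bound with the precise constant so that the exponent comes out exactly to $\card{I}/(2t)$ rather than an $\Omega(\cdot)$ version; everything else is routine averaging combined with the monotonicity of $\Sigma_3^t$-size under restrictions.
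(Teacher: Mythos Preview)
Your proposal is correct and follows essentially the same approach as the paper: randomly restrict the variables outside $I$, observe that the result is affine in the variables of $I$, argue that each variable of $I$ survives with probability at least $1/2$, average to find a good restriction, and invoke the $2^{k/t}$ depth-3 lower bound for parity. Your write-up is in fact slightly more explicit than the paper's in two places---you spell out the coefficient $c_i(\rho)$ algebraically, and you add the extra step of fixing the variables in $I\setminus I'$ so that the restricted function is literally $\mathrm{PARITY}_{\lvert I'\rvert}$ (or its negation)---but these are cosmetic; the paper cites the same parity bound (attributing it to Paturi--Saks--Zane rather than Paturi--Pudl\'ak--Zane) and the argument is otherwise identical.
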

\begin{proof}
  We randomly assign the values of the variables not belonging to $I$ and denote the resulting polynomial as $q$. It is clear that $s_3^t(p) \ge s_3^t(q)$.

  By construction of $I$, $q$ is an affine function in at most $\card{I}$ variables.
  If a variable $z \in I$ does not appear in any degree-$2$ monomial of $p$, then $z$ always appears in $q$. Otherwise, $z$ appears in $q$ with probability $\frac{1}{2}$.
  It follows that the expected number of the variables that appear in $q$ is at least $\frac{\card{I}}{2}$.
  Therefore, there exists an assignment, such that $q$ is an affine function in at least $\frac{\card{I}}{2}$ variables.

  The parity function in $n$ variables requires $\Sigma_3^t$ circuit of size at least $2^{\frac{n}{t}}$~\cite{DBLP:journals/cc/PaturiSZ00}.
  Thus, $s_3^t(q) \ge 2^{\frac{\card{I}}{2t}}$.
\end{proof}

\begin{lemma}
  Let $p$ be a degree-$2$ polynomial over $\bbF_2$ in $n$ variables. Then
  \[
    s_3^2(f) \ge 2^{n/10}.
  \]
\end{lemma}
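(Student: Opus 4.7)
My strategy is to apply \cref{lm:deg2_poly_ind_set} with $t=2$, which gives $s_3^2(p)\ge 2^{|I|/4}$ for any set $I$ of variables no two of which produce a monomial of $p$. Let $G$ be the \emph{monomial graph} of $p$: the graph on $[n]$ with an edge $\{i,j\}$ for each degree-$2$ monomial $x_i x_j$. A valid $I$ is exactly an independent set in $G$, so it suffices to exhibit an independent set of size at least $2n/5$.

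I plan a case split on the density of $G$. If $\alpha(G) \ge 2n/5$, applying \cref{lm:deg2_poly_ind_set} to a maximum independent set immediately gives $s_3^2(p) \ge 2^{\alpha(G)/4} \ge 2^{n/10}$. Otherwise $G$ is dense, and I switch to the counting bound $s_3^2(p) \ge \card{p^{-1}(1)}/R$, where $R$ is the largest size of $\sat(\phi)$ among $2$-CNFs $\phi$ with $\sat(\phi) \subseteq p^{-1}(1)$. By \cref{lm:2sat_u2_size_bound} we have $R \le (n/k^{\ast} + 1)^{k^{\ast}}$ with $k^{\ast} := \prj(p^{-1}(1))$, and using the classical fact that a non-constant degree-$2$ polynomial over $\bbF_2$ satisfies $\card{p^{-1}(1)} \ge 2^{n-2}$ (which follows from the normal form of quadratic forms over $\bbF_2$), the quotient $\card{p^{-1}(1)}/R$ is at least $2^{n/10}$ provided $k^{\ast}$ stays sufficiently bounded away from $n$.

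The main obstacle is producing a sharp enough upper bound on $k^{\ast}$ in the dense regime. The structural input is that a projection of dimension $k$ contained in $p^{-1}(1)$ corresponds to a partition $(E_1, \ldots, E_k, C)$ of $[n]$ such that, for every $\ell \neq m$, the number of $G$-edges between $E_\ell$ and $E_m$ is even (precisely the vanishing of the coefficient of $y_\ell y_m$ in $p$ restricted to the projection). A dense $G$ makes these parity constraints very restrictive, and in the regime $\alpha(G) < 2n/5$ I expect to force $k^{\ast} \le n/2 + o(n)$, which plugged into \cref{lm:2sat_u2_size_bound} is already enough. Calibrating the two regimes so that the weaker bound in each meets at $2^{n/10}$ is the remaining essentially arithmetic step.
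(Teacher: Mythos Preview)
Your overall strategy matches the paper's: split on the independence number $\alpha(G)$ of the monomial graph, apply \cref{lm:deg2_poly_ind_set} when $\alpha(G)\ge \beta n$, and in the dense case bound $\prj(p^{-1}(1))$ and use \cref{lm:2sat_u2_size_bound}. The threshold $\beta\approx 0.4$ is also the one the paper ends up with. However, the crucial step --- bounding $k^\ast=\prj(p^{-1}(1))$ --- is not actually carried out in your proposal. You only say that density plus the parity constraints ``should'' force $k^\ast\le n/2+o(n)$; this is both unjustified and stronger than what you need (and likely stronger than what is true under the sole hypothesis $\alpha(G)<2n/5$).

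The paper closes this gap with a one-line observation that is in fact an immediate consequence of the parity condition you already wrote down. If $E_\ell=\{i\}$ and $E_m=\{j\}$ are two \emph{singleton} blocks of the projection and $\{i,j\}\in E(G)$, then the number of $G$-edges between $E_\ell$ and $E_m$ is exactly one, which is odd --- contradicting constancy of $p$ on the projection. Hence the variables occupying singleton blocks form an independent set of $G$, so there are fewer than $\beta n$ singleton blocks; every remaining block has size at least $2$, and counting variables gives
\[
n \;\ge\; (\text{\# singletons}) + 2\bigl(d-\text{\# singletons}\bigr) \;>\; 2d-\beta n,
\]
so $d<\tfrac{1+\beta}{2}\,n$. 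With $\beta=0.4$ this is $d<0.7n$, and plugging into \cref{lm:2sat_u2_size_bound} together with $\card{p^{-1}(1)}\ge 2^{n-2}$ yields $s_3^2(p)\ge 2^{n/10}$. So your parity framework was the right one; what was missing is this singleton-block argument, not a density/parity estimate aiming at $n/2+o(n)$.
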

\begin{proof}
  Let $I$ be the largest set that satisfies the assumptions of \cref{lm:deg2_poly_ind_set}.

  If $\card{I} \ge \beta n$, then
  \[
    s_3^2(p) \ge 2^{\frac{\beta}{4} n}.
  \]

  Otherwise, there are no large projections that make $p$ constant.
  Consider a projection $(A_0, B_0, A_1, B_1, \ldots, A_d, B_d)$ of dimension $d$, which makes $p$ constant. Let $J$ be the set of all the variables contained in some $(A_i, B_i)$ for $i \ge 1$ with $\card{A_i \cup B_i} = 1$. $J$ satisfies the conditions for \cref{lm:deg2_poly_ind_set} since otherwise $p$ under the projection would have a monomial with a non-zero coefficient. Thus, $\card{J} < \beta n$ and every other part $(A_i, B_i)$ contains at least two variables.

  Therefore, we have the following:
  \[
    n = \sum\limits_{i=0}^d \card{A_i \cup B_i} \geq \card{J} + (d - \card{J}) 2 > 2d - \beta n.
  \]

  It gives an upper bound on $d$:
  \[
    d < \frac{1+\beta}{2} n.
  \]

  This, \cref{thm:sigma32_prj}, and a well-known fact that a degree-$2$ polynomial over $\bbF_2$ is $1$ on at least $2^{n-2}$ inputs implies that
  \[
    s_3^2(p) \ge 2^{\left( 1 - \frac{1+\beta}{2} \log(\frac{2}{1+\beta} + 1) \right) n - o(n)}.
  \]

  By choosing $\beta \approx 0.4$, we conclude that $s_3^2(p) \ge 2^{n/10}$.
\end{proof}

\section{The inner product and \texorpdfstring{$2$}{2}-CNF formulas}

It is more convenient for us to consider the negation of the inner product function $\IP$ on $k$ pairs of variables.
\begin{equation*}
    \IP(x_1, \ldots, x_k, y_1, \ldots, y_k) \coloneqq
    \begin{cases}
      1, & \text{if } \sum\limits_{i=1}^k x_i y_i \pmod 2 = 0 \\
      0, & \text{otherwise}.
    \end{cases}
\end{equation*}

\cite{DBLP:conf/innovations/GolovnevKW21} studied the following properties of Boolean circuits.
For an integer $k \ge 2$, $\alpha(k)$ is the infimum of all values $\alpha$ such that any circuit of size $s$ can be rewritten as an $\mathrm{OR}_{2^{\alpha s}} \circ \mathrm{AND} \circ \mathrm{OR}_k$ circuit.
The exact value is only known for $\alpha(2)$, and they showed that $\alpha(3) \le \frac{\log_2 3}{4}$.
The inner product function is a natural candidate for a hard function for $\Sigma_3^3$.

\begin{lemma}[\cite{DBLP:conf/innovations/GolovnevKW21}]
  \leavevmode%
  \begin{enumerate}
    \item $2^{\frac{k}{2}} \le s_3^2(\IP) \le 2^{k-o(k)}$.
    \item $2^{\frac{k}{3}} \le s_3^3(\IP) \le 3^{\frac{k}{2}}$.
  \end{enumerate}
\end{lemma}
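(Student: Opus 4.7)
The plan is to handle the two lower bounds uniformly by restricting $\IP$ to a parity function, and to prove the two upper bounds by explicit coverings, one tailored to bottom fan-in $2$ and one to bottom fan-in $3$.

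For both lower bounds I would set $x_1 = \cdots = x_k = 1$, reducing $\IP$ to $\bar{y}_1 \oplus y_2 \oplus \cdots \oplus y_k$, which is the parity function on $k$ variables up to negating $y_1$. Restriction and the renaming $y_1 \mapsto \bar{y}_1$ both preserve $\Sigma_3^t$ size, so any $\Sigma_3^t$ circuit for $\IP$ yields a $\Sigma_3^t$ circuit for parity of no larger size. The Paturi--Saks--Zane bound $s_3^t(\mathrm{parity}_k) \ge 2^{k/t}$ cited in the proof of \cref{lm:deg2_poly_ind_set} then gives $s_3^t(\IP) \ge 2^{k/t}$ for $t \in \{2, 3\}$.

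For the upper bound $s_3^2(\IP) \le 2^{k - o(k)}$ I would cover $\IP^{-1}(1)$ by 2-CNFs indexed by even-cardinality subsets of $[k]$. For each $S \subseteq [k]$ with $|S|$ even, let
\[
  \phi_S \coloneqq \bigwedge_{i \in S} x_i \wedge \bigwedge_{i \in S} y_i \wedge \bigwedge_{i \notin S} (\bar{x}_i \vee \bar{y}_i).
\]
Each $\phi_S$ is a 2-CNF, $\sat(\phi_S) \subseteq \IP^{-1}(1)$ since $\sum x_i y_i = |S|$ is even, and every $(x, y) \in \IP^{-1}(1)$ is captured by the unique $\phi_S$ with $S = \{i : x_i y_i = 1\}$. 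Since there are $2^{k-1}$ even-size subsets, the construction uses $2^{k-1} = 2^{k - o(k)}$ disjuncts.

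For the sharper upper bound $s_3^3(\IP) \le 3^{k/2}$ I would group the $k$ index pairs into $k/2$ blocks $(x_{2j-1}, y_{2j-1}, x_{2j}, y_{2j})$ and, writing $z_i = x_i y_i$, offer three options per block: (i) $z_{2j-1} = z_{2j}$ (block-parity $0$), (ii) $z_{2j-1} = 1, z_{2j} = 0$ (block-parity $1$), (iii) $z_{2j-1} = 0, z_{2j} = 1$ (block-parity $1$). Options (ii) and (iii) are 2-CNFs. Option (i) rewrites as the conjunction of the two implications $(\bar{z}_{2j-1} \vee z_{2j}) \wedge (\bar{z}_{2j} \vee z_{2j-1})$, and after substituting $z_i = x_i y_i$ and distributing over the AND, each implication becomes two 3-clauses such as $(\bar{x}_{2j-1} \vee \bar{y}_{2j-1} \vee x_{2j})$ and $(\bar{x}_{2j-1} \vee \bar{y}_{2j-1} \vee y_{2j})$, so option (i) is a 3-CNF. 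Taking one option per block and ANDing across blocks yields a 3-CNF, and restricting to the combinations in which an even number of blocks use (ii) or (iii) ensures the resulting 3-CNF is $\IP$-consistent. Every $(x, y) \in \IP^{-1}(1)$ is covered by choosing (i) on each block with $z_{2j-1} = z_{2j}$ and (ii) or (iii) otherwise, the global parity $\sum z_i \equiv 0 \pmod 2$ forcing the chosen combination to be valid. The number of valid combinations is $\tfrac{1}{2}(3^{k/2} + (-1)^{k/2}) \le 3^{k/2}$. The key obstacle, and the reason the bound drops from $2^{k-1}$ to $3^{k/2}$, is packaging the two block-parity-$0$ cases $(0,0)$ and $(1,1)$ into the single 3-CNF (i); the analogous merger of the two block-parity-$1$ cases $(1,0)$ and $(0,1)$ would require a 4-clause to rule out $(1,1,1,1)$, which is exactly why this construction halts at bottom fan-in $3$.
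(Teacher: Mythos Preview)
The paper does not actually prove this lemma; it is quoted from \cite{DBLP:conf/innovations/GolovnevKW21} and accompanied only by the one-line remark that ``both lower bounds are obtained via a simple reduction to the parity function and the fact that $s_3^t(\oplus_n) \ge 2^{n/t}$.'' Your lower-bound argument is exactly this reduction, so on that half you match the paper verbatim.

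For the two upper bounds the paper gives nothing to compare against, so your proposal supplies what the paper omits. Both constructions are correct. The $2$-CNF covering by even-cardinality sets $S$ yields $2^{k-1}$ disjuncts, which is indeed $2^{k-o(k)}$. The $3$-CNF covering is the standard one: the key point, which you identify correctly, is that the equality $x_{2j-1}y_{2j-1}=x_{2j}y_{2j}$ expands into four width-$3$ clauses, allowing the two block-parity-$0$ cases to be merged into a single option, and the count $\tfrac{1}{2}\bigl(3^{k/2}+(-1)^{k/2}\bigr)\le 3^{k/2}$ is right. The only cosmetic gap is that you implicitly assume $k$ is even when forming $k/2$ blocks; for odd $k$ one handles the leftover pair with the two options $z_k=0$ and $z_k=1$, giving at most $3^{(k-1)/2}\cdot 2\le 3^{\lceil k/2\rceil}$, which is the intended reading of the bound.
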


Both lower bounds are obtained via a simple reduction to the parity function and the fact that $s_3^t(\oplus_n) \ge 2^{\frac{n}{t}}$~\cite{DBLP:journals/cc/PaturiSZ00}.
If the upper bound for $s_3^3(\IP)$ in the lemma is tight, then $\alpha(3) = \frac{\log_2 3}{4}$. However, the correct bound is not known even for $s_3^2(\IP)$.

We say that a CNF formula $\phi$ is \emph{consistent} with the inner product if $\phi^{-1}(1) \subseteq \IP^{-1}(1)$. We denote this as $\phi \leq \IP$.
The $2$-universality can be used to prove that every $2$-CNF formula consistent with the inner product has at most $3^{k}$ satisfying assignments. However, if applied directly, this only gives $2^{0.40k}$ lower bound, which is worse than the reduction to the parity function.

\begin{theorem}
  Let $\phi$ be a $2$-CNF formula consistent with the inner product on $k$ pairs of variables. Then
  \[
    \card{\sat(\phi)} \leq 3^k.
  \]
\end{theorem}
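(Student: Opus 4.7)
The plan is to combine the $\un_2$-based size bound of \cref{lm:2sat_u2_size_bound} with the fact that the largest affine subspace of $\bbF_2^{2k}$ contained in $\IP^{-1}(1)$ has dimension exactly $k$. Writing $n = 2k$ and letting $d \coloneqq \un_2(\sat(\phi)) = \af(\sat(\phi))$ (the equality is \cref{lm:2sat_af_eq_u2}), \cref{lm:2sat_u2_size_bound} gives
\[
  \card{\sat(\phi)} \le \left(\frac{2k}{d} + 1\right)^d.
\]
Since $\sat(\phi) \subseteq \IP^{-1}(1)$, every affine subspace contained in $\sat(\phi)$ also lies in $\IP^{-1}(1)$, so it suffices to bound $\af(\IP^{-1}(1)) \le k$ and then maximize the above expression over $d \in (0, k]$.

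The heart of the argument is the bound $\af(\IP^{-1}(1)) \le k$. Let $Q(v) \coloneqq \sum_{i=1}^k x_i y_i$, so that $\IP^{-1}(1) = \{Q = 0\}$, and let $B(v, v') \coloneqq Q(v + v') - Q(v) - Q(v') = \sum_{i=1}^k (x_i y'_i + x'_i y_i)$ be the associated symmetric bilinear form. Suppose $V = v_0 + W \subseteq \IP^{-1}(1)$ where $W$ is a linear subspace. Using $Q(v_0) = 0$, for every $w \in W$,
\[
  0 = Q(v_0 + w) = Q(w) + B(v_0, w),
\]
so $Q|_W$ coincides with the linear functional $\ell(w) \coloneqq B(v_0, w)$. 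Expanding $Q(w_1 + w_2) = \ell(w_1 + w_2)$ and using the bilinearity of $\ell$ yields $B(w_1, w_2) = 0$ for all $w_1, w_2 \in W$, so $W$ is totally isotropic for $B$. The form $B$ is non-degenerate on $\bbF_2^{2k}$ (pairing $v$ with the standard basis vectors recovers all coordinates of $v$) and alternating over $\bbF_2$ since $B(v, v) = 2\sum x_i y_i = 0$. Hence $W \subseteq W^\perp$ with $\dim W^\perp = 2k - \dim W$, giving $\dim W \le k$.

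To finish, I will show that $f(d) \coloneqq d \log(2k/d + 1)$ is strictly increasing on $(0, 2k]$. Substituting $u = 2k/d$, the sign of $f'(d)$ matches that of $\log(1 + u) - u/(1 + u)$, which is positive for $u > 0$. Consequently $(2k/d + 1)^d$ is maximized at $d = k$ on the interval $(0, k]$, yielding $\card{\sat(\phi)} \le (2k/k + 1)^k = 3^k$. The main obstacle is the Witt-index computation in step two: it is the one place where the concrete structure of $\IP$ (rather than merely its density) is exploited, and it relies on handling the interplay between the quadratic form $Q$ and its polarization $B$ over $\bbF_2$. Everything else follows cleanly from the $\un_2$ machinery developed earlier in the paper.
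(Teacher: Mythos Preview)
Your proof is correct and follows essentially the same route as the paper: bound $\af(\sat(\phi)) \le k$ using that $\IP$ is an affine disperser for dimension $k$, then invoke \cref{lm:2sat_af_eq_u2} and \cref{lm:2sat_u2_size_bound}. The only differences are that the paper simply cites the disperser property (e.g., \cite{DBLP:conf/innovations/CohenS16}) whereas you supply a direct symplectic-form argument, and you make explicit the monotonicity of $d \mapsto (2k/d + 1)^d$ that the paper uses implicitly when plugging in $d = k$.
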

\begin{proof}
  Let $S = \sat(\phi)$.

  It is well-known that $\IP$ is a $k$-affine disperser (see, e.g.,~\cite{DBLP:conf/innovations/CohenS16}). Thus, $\af(S) \le k$. By \cref{lm:2sat_af_eq_u2}, $\un_2(S) = \af(S) \le k$.
  \Cref{lm:2sat_u2_size_bound} implies that
  \[
    \card{S} \leq {\left(\frac{2k}{k}+1\right)}^k = 3^k.
  \]
\end{proof}

What is more, the $2$-CNF formula that has this many satisfying assignments is unique.

Consider a $2$-CNF formula $\phi$ such that $\phi \leq IP$ and $\phi$ has the maximal possible number of satisfying assignments.
We will prove that there is only one $2$-CNF formula that has this many satisfying assignments:
\begin{equation*}
  \bigwedge\limits_{i=1}^k \left(\neg x_i \lor \neg y_i\right).
\end{equation*}

A \emph{transitive closure} $\tc(\phi)$ of a CNF formula $\phi$ is an equivalent CNF formula that contains all clauses that can be derived from $\phi$.

\begin{fact}
  Let $\phi$ be a $2$-CNF formula.
  $\phi \nvDash (z = \alpha)$ for every variable $z$ of $\phi$ and every $a \in \Q$ if and only if every clause of $\tc(\phi)$ has width $2$.
\end{fact}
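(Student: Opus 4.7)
The plan is to view the biconditional as relating the presence of short clauses (width $0$ or $1$) in $\tc(\phi)$ to forced variable assignments, and to exploit standard soundness and completeness of resolution for $2$-CNFs.

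The first step is a routine structural observation: applying the resolution rule to two clauses of width at most $2$ produces a clause of width at most $2$ (resolving $u \vee v$ with $\neg u \vee w$ yields $v \vee w$, which may collapse to a unit clause or the empty clause if the remaining literals coincide or annihilate). By induction on derivation depth, every clause of $\tc(\phi)$ has width in $\{0,1,2\}$, so the hypothesis that every clause of $\tc(\phi)$ has width $2$ is equivalent to forbidding width-$0$ and width-$1$ clauses in $\tc(\phi)$.

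For the forward direction, I would assume $\phi \nvDash (z = \alpha)$ for every variable $z$ and every $\alpha \in \Q$. Then no unit literal is a semantic consequence of $\phi$, and by soundness of resolution no unit clause can appear in $\tc(\phi)$. Moreover, $\phi$ must be satisfiable, since an unsatisfiable formula forces every literal, so the empty clause is absent from $\tc(\phi)$ as well. Combined with the width-$\le 2$ bound above, every clause of $\tc(\phi)$ has width exactly $2$.

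For the converse direction I would argue by contrapositive: if $\phi \vDash (z = \alpha)$ for some $z$ and $\alpha$, I need to exhibit a short clause in $\tc(\phi)$. Let $\ell$ be the literal asserting $z = \alpha$. By the standard clausal completeness of resolution, any implicate of $\phi$ has a subclause derivable from $\phi$ by resolution; applied to the unit implicate $\ell$, this yields either the unit clause $\ell$ itself or the empty clause as a member of $\tc(\phi)$, each of width strictly less than $2$, contradicting the hypothesis. The only subtle point in the whole argument is this appeal to clausal completeness; if the paper's definition of $\tc(\phi)$ is read semantically as the set of clausal consequences of $\phi$, the converse becomes immediate and the fact collapses to the two easy soundness arguments.
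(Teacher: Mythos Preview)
The paper states this as a \emph{Fact} and gives no proof at all; your proposal supplies one. Your argument is correct: the width-$\le 2$ closure of resolution on $2$-clauses, soundness for the forward direction, and implicational completeness for the converse are exactly the standard ingredients, and your remark that a semantic reading of $\tc(\phi)$ would make the converse trivial is apt given the paper's loose definition (``all clauses that can be derived from $\phi$'').
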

%

Without loss of generality we can assume that $\phi = \tc(\phi)$.

We will first prove a general property of $2$-CNF formulas consistent with the inner product.

\begin{lemma}\label{lem:forbidden_assignments}
  Consider a 2-CNF formula $\phi$ that is consistent with inner product on $k$ pairs of variables.
  Let $J \subseteq \left[k\right], J \neq \varnothing$.
  Suppose that $\phi$ has two satisfying assignments $\sigma$ and $\tau$ such that:
  \begin{itemize}
    \item $\sigma(x_i) = 0$ and $\sigma(y_i) = 1$ for $i \in J$.
    \item $\tau(x_i) = 1$ and $\tau(y_i) = 0$ for $i \in J$.
    \item $\sigma(x_i) = \tau(x_i)$ and $\sigma(y_i) = \tau(y_i)$ for $i \in \left[k\right] \setminus J$.
  \end{itemize}

  Then at least one of the following holds:
  \begin{enumerate}
    \item\label{case:zero_edge} $\phi \vDash (x_i y_i = 0)$ for some $i \in J$.
    \item\label{case:equality_on_xs} $\phi \vDash (x_i = x_j)$ for some distinct $i, j \in J$.
  \end{enumerate}
\end{lemma}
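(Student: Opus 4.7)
The strategy is to argue by contradiction: assuming neither (\ref{case:zero_edge}) nor (\ref{case:equality_on_xs}) holds, I will construct a satisfying assignment $\rho \in \sat(\phi)$ with $\sum_i \rho(x_i)\rho(y_i)$ odd, contradicting $\phi \leq \IP$. Without loss of generality assume $\phi = \tc(\phi)$, so that $\phi \vDash (\ell \to \ell')$ iff there is an edge $\ell \to \ell'$ in the implication digraph $D(\phi)$.

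To choose the pivot, define a digraph $H$ on $J$ with an edge $i \to j$ iff $i \neq j$ and $\phi \vDash (x_i \to x_j)$. Transitivity of implication makes $H$ transitive, and any directed cycle in $H$ would collapse to a pair of indices with $\phi \vDash (x_i = x_j)$, contradicting the failure of~(\ref{case:equality_on_xs}). Hence $H$ is a DAG and admits a sink $i_0 \in J$. Let $R = \{\ell : \phi \vDash (x_{i_0} \to \ell)\}$, the set of literals reachable from $x_{i_0}$ in $D(\phi)$. Note $R$ is consistent (contains no pair $z, \neg z$), since otherwise $\phi \vDash \neg x_{i_0}$, contradicting $\tau(x_{i_0}) = 1$.

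Define $\rho$ variable by variable: call $z$ a \emph{$\tau$-variable} if some literal on $z$ lies in $R$, otherwise a \emph{$\sigma$-variable}; set $\rho(z) = \tau(z)$ in the first case and $\rho(z) = \sigma(z)$ in the second. The main technical point is verifying $\rho \in \sat(\phi)$: for every edge $\ell_1 \to \ell_2$ of $D(\phi)$ with $\rho(\ell_1) = 1$, one checks $\rho(\ell_2) = 1$. If $\ell_1 \in R$, then $\ell_2 \in R$ by closure, and both literals are true in $\tau$ (hence in $\rho$). Otherwise the variable of $\ell_1$ must be a $\sigma$-variable (else $\rho(\ell_1) = \tau(\ell_1) = 0$), so $\sigma(\ell_1) = 1$ and closure of $\sigma$ gives $\sigma(\ell_2) = 1$; the only subtle subcase is that $\ell_2$'s variable is a $\tau$-variable with $\neg \ell_2 \in R$, but then contraposition of $\ell_1 \to \ell_2$ yields the chain $x_{i_0} \to \neg \ell_2 \to \neg \ell_1$ in $D(\phi)$, forcing $\neg \ell_1 \in R$ and contradicting the $\sigma$-variable status of $\ell_1$'s variable.

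It remains to tally $\sum_i \rho(x_i)\rho(y_i) \bmod 2$. For $j \in J \setminus \{i_0\}$: the sink property of $i_0$ gives $x_j \notin R$, and $\tau(x_j) = 1$ forces $\neg x_j \notin R$, so $\rho(x_j) = \sigma(x_j) = 0$ and the pair contributes $0$. For $j = i_0$: $\rho(x_{i_0}) = \tau(x_{i_0}) = 1$; moreover $y_{i_0} \notin R$ (because $\tau(y_{i_0}) = 0$) and $\neg y_{i_0} \notin R$ (else case~(\ref{case:zero_edge}) holds at $i_0$), so $\rho(y_{i_0}) = \sigma(y_{i_0}) = 1$ and the pair contributes $1$. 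For $i \notin J$, $\sigma$ and $\tau$ agree, so $\rho$ matches $\sigma$. Hence the total equals $1 + \sum_{i \notin J} \sigma(x_i)\sigma(y_i)$, and since $\sigma \in \sat(\phi) \subseteq \IP^{-1}(1)$ has even total with $J$-part $0$, the outside sum is even and $\rho$'s total is odd, so $\rho \notin \IP^{-1}(1)$, contradicting $\rho \in \sat(\phi)$. The main obstacle is the case analysis verifying $\rho \in \sat(\phi)$: because $\rho$ is a hybrid of $\sigma$ and $\tau$, closure must be checked using both $\sigma$'s and $\tau$'s satisfiability together with the forward-closure of $R$.
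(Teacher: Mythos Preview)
Your proof is correct and takes a genuinely different route from the paper's.

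The paper restricts to the common part outside $J$, then for each fixed $i\in J$ flips $\sigma(x_i)$ to $1$ and analyses the clauses falsified by this single-bit change: each must be $\neg x_i\lor x_j$ or $\neg x_i\lor\neg y_j$. If $\neg x_i\lor\neg y_i$ appears, case~(\ref{case:zero_edge}) holds; otherwise a second round of flips (setting the relevant $y_j$ to $0$) is used to rule out the possibility that only clauses of the second type were falsified, forcing a clause $\neg x_i\lor x_j$ with $j\neq i$. Doing this for every $i$ yields a functional digraph on $J$, whose cycle gives case~(\ref{case:equality_on_xs}).

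Your argument instead assumes both cases fail, locates a sink $i_0$ in the $x$-implication digraph on $J$, and builds in one shot a hybrid assignment via the forward-closure $R$ of $x_{i_0}$ in $D(\phi)$: $\tau$ on variables touched by $R$, $\sigma$ elsewhere. The verification that this hybrid satisfies $\phi$ is exactly the standard ``merge two satisfying assignments along a reachability cut'' argument for 2-CNFs, and the sink property together with the failure of case~(\ref{case:zero_edge}) pins down $\rho(x_{i_0})=\rho(y_{i_0})=1$ while keeping all other $J$-pairs at $0$. This is more structural and reuses the implication-digraph machinery already present in the paper (\cref{lm:vdu2}), avoiding the two-stage flip-and-repair analysis; the paper's approach, by contrast, is more elementary in that it never needs to reason about global reachability, only about which width-$2$ clauses a single flip can falsify.
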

\begin{proof}
  Let $\rho$ be the common part of $\sigma$ and $\tau$ (i.e., $\rho$ is a partial assignment to $x_i$ and $y_i$, where $i \in \left[k\right] \setminus J$). Consider the restricted formula $\psi = \restr{\phi}{\rho}$.

  For every variable $z$ of $\psi$ we have $\sigma(z) = \neg\tau(z)$. Therefore, $\psi$ cannot have clauses of length $1$.

  Fix $i \in J$ and consider the assignment $\sigma'$ that coincides with $\sigma$ except for the value of $x_i$: $\sigma'(x_i) = 1$. This assignment cannot satisfy $\psi$ since we flipped the value of only one monomial $x_i y_i$ without changing anything else. Every clause that is falsified by $\sigma'$ must contain $\neg x_i$ and have length $2$. Thus, it can be either of these:
  \begin{enumerate}
    \item $\neg x_i \lor x_j$ for some $j \in J, j \neq i$.
    \item $\neg x_i \lor \neg y_j$ for some $j \in J$.
  \end{enumerate}

  If there is a clause of the second type with $j=i$, then $\phi \vDash (x_i y_i = 0)$, and it concludes the proof.

  If it is not the case, we show that there must be a clause of the first type. Assume the opposite: there is a set $A \subseteq J \setminus \{i\}$ such that $\neg x_i \lor \neg y_j$, where $j \in A$, are the only clauses that are falsified by $\sigma'$.
  Define another assignment $\sigma''$ as follows: $\sigma''(y_j) = 0$ if $j \in A$ and $\sigma''(z) = \sigma'(z)$ otherwise.
  We show that the assignment $\sigma''$ satisfies $\psi$. Firstly, note that by construction it satisfies all the clauses that are falsified by $\sigma'$.

  Suppose that a clause $y_j \lor \ell$ is not satisfied by $\sigma''$, where $\ell$ is a literal. Observe that for any $t$, $\ell \neq y_t$ since $\tau$ sets every $y_t$ to $0$ and $\tau$ satisfies $\psi$. Also, $\ell \neq \neg y_t$, where $t \in A$, since in this case $\sigma''(y_t) = 0$. Hence, $\sigma''(\ell) = \sigma'(\ell)$. We can resolve this clause with $\neg x_i \lor \neg y_j$ and get $\neg x_i \lor \ell$. Thus, $\neg x_i \lor \ell$ must also be unsatisfied by $\sigma''$, which is a contradiction to the fact that $\sigma''$ satisfies all the clauses falsified by $\sigma'$.

  On the other hand, $\sum\limits_{i \in J} \sigma'(x_i) \sigma'(y_i) = \sum\limits_{i \in J} \sigma''(x_i) \sigma''(y_i)$. Therefore, $\sigma''$ cannot be a satisfying assignment of $\psi$.

  Hence, for every $i \in J$ there exists $j \in J \setminus \{i\}$ such that $\neg x_i \lor x_j$ is a clause of $\psi$. The conjunction of these clauses implies $x_i = x_j$ for some distinct $i$ and $j$.
\end{proof}

We are now ready to prove the uniqueness of the extremal 2-CNF.

\begin{theorem}\label{thm:ip_max_assignments}
  Let $\phi$ be a $2$-CNF formula consistent with the inner product that has the maximum number of satisfying assignments, i.e., $\card{\sat(\phi)} = 3^k$. Then for every $i \in \left[k\right]$ it holds that $\phi \vDash (x_i y_i = 0)$. Therefore, $\phi$ is equivalent to $\bigwedge\limits_{i=1}^k (\neg x_i \lor \neg y_i)$.
\end{theorem}

\begin{proof}
  We prove the statement by induction on $k$. The base case $k=1$ is clear.

  For the inductive step, we use \cref{lem:forbidden_assignments}.

  First we show that it is enough to show that $\phi \vDash (x_i y_i = 0)$ for at least one $i \in [k]$.

  \begin{claim}\label{clm:one_implies_every}
    Suppose that there exists $i \in \left[k\right]$ such that $\phi \vDash (x_i y_i = 0)$.
    Then for \emph{every} $i \in \left[k\right]$ it holds that $\phi \vDash (x_i y_i = 0)$.
  \end{claim}
  \begin{claimproof}
    There are three ways of setting $x_i y_i$ to $0$. For every satisfying assignment $\sigma$ of $\phi$ we have one the following:
    \begin{itemize}
      \item $\sigma(x_i) = \sigma(y_i) = 0$.
      \item $\sigma(x_i) = 0$ and $\sigma(y_i) = 1$.
      \item $\sigma(x_i) = 1$ and $\sigma(y_i) = 0$.
    \end{itemize}

    Now choose a partial assignment $\rho$ of the variables $x_i$ and $y_i$, such that the number of satisfying assignments of $\phi$ that coincide with $\rho$ is maximal. Then $\card{\sat(\restr{\phi}{\rho})} \geq 3^{k-1}$ and we can apply the inductive hypothesis.
  \end{claimproof}

  If $\phi$ has satisfying assignments that satisfy the assumptions of \cref{lem:forbidden_assignments}, then either $\phi$ implies $x_i y_i = 0$ for some $i \in \left[k\right]$, and we can apply the claim above, or $\phi$ implies $x_i = x_j$ for some $i, j \in [k], i \neq j$.
  In the latter case, we show that $\phi$ has less than $3^k$ satisfying assignments. Under this assumption, every satisfying assignment of $\phi$ satisfies
  \[
    x_i(y_i+y_j) + \sum_{s \in [k] \setminus \{i,j\}} x_s y_s = 0.
  \]

  Let $\rho$ be an arbitrary assignment to $x_i,y_i,y_j$. By \cref{thm:ip_max_assignments}, $\restr{\phi}{\rho}$ has at most $3^{k-2}$ satisfying assignments. Therefore, in total $\phi$ can have no more than $8\cdot 3^{k-2} < 3^k$ satisfying assignments.

  To conclude the proof, we show that if $\phi$ does not satisfy the assumptions of \cref{lem:forbidden_assignments}, then the number of satisfying assignments $\card{\sat(\phi)}$ is strictly less than $3^k$. We want to count the number of satisfying assignments in this case.
  By the definition of $\IP$, only an even number of monomials $x_i y_i$ can be set to $1$.
  Thus, for every satisfying assignment $\sigma$ of $\phi$ there exists a set $I \subseteq [k]$ of even size such that $\sigma(x_i) = \sigma(y_i) = 1$ for $i \in I$ and $\sigma(x_i) \sigma(y_i) = 0$ for $i \not\in I$.
  Let $J \subseteq [k] \setminus I$ be the set of all the indices $i$ satisfying $\sigma(x_i) = \sigma(y_i) = 0$. Since we assume that we cannot apply \cref{lem:forbidden_assignments}, there can be at most one satisfying assignment of $\phi$ for every choice of $I$ and $J$.

  Thus, the total number of satisfying assignments of $\phi$ with fixed $I$ is as most $2^{k-\card{I}}$.
  It follows that
  \begin{equation*}
    \card{\sat(\phi)} \leq \sum\limits_{\substack{I \subseteq [k] \\ \card{I} \text{ is even}}} 2^{k-\card{I}} = \sum\limits_{s=0}^{k/2} \binom{k}{2s} 2^{k-2s} = \frac{1}{2}\left( 3^k + 1 \right) < 3^k.
  \end{equation*}
\end{proof}

\section{Conclusion}

The most immediate problem which remains open is determining the exact $\Sigma_3^2$ and eventually $\Sigma_3^3$ complexity of $\IP$. It would be particularly pleasant if this is resolved using a stability argument extending our result on the uniqueness of 2-CNFs consistent with $\IP$ with the maximum number of satisfying assignments.

More generally collecting new combinatorial insights on the set of satisfying assignments of $k$-CNFs seems necessary to make progress towards $\Sigma_3^k$ lower bounds (and $k$-SAT which we did not cover in this paper).

Our work immediately raises the following natural question. Can we obtain better Sauer--Shelah lemmas for $k$-CNFs, i.e., given $d, k, n$ what is the largest size of a set $S \subseteq \Q^n$ with $\vc(S) = d$ which is the set of satisfying assignments of a $k$-CNF formula? We showed that for $k = 2$ this bound is ${\left(\frac{n}{d}+1\right)}^d$.

\bibliography{ref}

\end{document}